\newtheorem{theorem}{Theorem}[section]
\newtheorem{lemma}[theorem]{Lemma}
\newtheorem{meta-theorem}[theorem]{Meta-Theorem}
\newtheorem{claim}[theorem]{Claim}
\newtheorem{corollary}[theorem]{Corollary}
\definecolor{darkgreen}{rgb}{0,0.5,0}
\Crefname{lemma}{Lemma}{Lemmas}
\Crefname{claim}{Claim}{Claims}
\Crefname{remark}{Remark}{Remarks}
\Crefname{observation}{Observation}{Observations}
\algnewcommand\algorithmicswitch{\textbf{switch}}
\algnewcommand\algorithmiccase{\textbf{case}}
\newcommand{\eps}{\varepsilon}
\newcommand{\local}{$\mathsf{LOCAL}$\xspace}
\newcommand{\poly}{\operatorname{\text{{\rm poly}}}}
\renewcommand{\paragraph}[1]{\vspace{0.15cm}\noindent {\bf #1.}}
\newcommand{\FullOrShort}{full}
  \newcommand{\fullOnly}[1]{#1}
  \newcommand{\shortOnly}[1]{}
    \newcommand{\shortOnly}[1]{#1}
    \newcommand{\fullOnly}[1]{}
\begin{document}
\date{}
\title{Local Computation of Maximal Independent Set}

\author{
  Mohsen Ghaffari\thanks{Supported in part by funding from the European Research Council (ERC), under the European Union’s Horizon 2020 research and innovation program (grant agreement 853109).
  }\\
  \small MIT \\
  \small ghaffari@mit.edu
 }

\maketitle

\begin{abstract}
We present a randomized Local Computation Algorithm (LCA) with query complexity $\poly(\Delta) \cdot \log n$ for the Maximal Independent Set (MIS) problem. That is, the algorithm  determines whether each node is in the computed MIS or not using $\poly(\Delta) \cdot \log n$ queries to the adjacency lists of the graph, with high probability, and this can be done for different nodes simultaneously and independently. Here $\Delta$ and $n$ denote the maximum degree and the number of nodes. This algorithm resolves a key open problem in the study of local computations and sublinear algorithms (attributed to Rubinfeld).  
\end{abstract}

\setcounter{page}{0}
\thispagestyle{empty}

\vspace{1cm}

\newpage
\section{Introduction and Related Work}
When dealing with massive data, even linear time algorithms might be too slow. A natural and useful paradigm in the area of sublinear algorithms, especially when dealing with problems where the output may be large (and thus computing or even storing the entire output may be infeasible), is that of \emph{local computations}. Here, the algorithm should be able to compute any particular part of the output in sublinear time. In this paper, we study local computation algorithms for the well-studied Maximal Independent Set (MIS) problem. Before describing our contribution, let us review the background of the problem, including the model and the known results.

\subsection{Background: Prior Work and Model}
We review the prior work on local computations of MIS in two categories, after recalling the formal model. We note that the results in the first category were obtained before the introduction of the formal model (and technically, provide a weaker expected-per-vertex query complexity guarantee).

\paragraph{Local Computation Algorithms} As introduced by Rubinfeld et al.~\cite{rubinfeld2011LCA} and Alon et al.~\cite{alon2012LCA}, a Local Computation Algorithm (LCA) has query access to the graph where each query $(v, i)$ returns the $i^{th}$ neighbor of vertex $v$ (or $\emptyset$, if there is no such neighbor). The algorithm also has access to a string of randomness. In the case of the MIS problem, when \emph{questioned} about a vertex $v$, the algorithm performs a number of queries to the graph (and the string of randomness) and outputs a YES/NO answer indicating whether $v$ is in the MIS or not. These outputs are independent of the questions to the algorithm, and indeed, one can question the algorithm about different vertices simultaneously and independently. The measure of complexity is the worst-case number of queries the algorithm performs to answer any single question. We use $n$ and $\Delta$ to denote the number of vertices and the maximum degree, and the focus is generally on graphs where $\Delta \ll n$. See \cite{rubinfeld2011LCA, alon2012LCA} for related work, motivations, and applications, and \cite{levi2017centralized} for a survey. See also Lovasz's book on large networks and graph limits\cite[Section 22.3]{lovasz2012large}, which discusses such local computations and connections to distributed local algorithms.

\medskip
\paragraph{Category I (Expected Query Complexity Per Vertex)} Nguyen and Onak~\cite{nguyen2008constant}, as the core technical ingredient of their celebrated sublinear time approximation algorithms work, developed the first local algorithm for MIS. Their algorithm simulates the Randomized Greedy MIS (RGMIS) algorithm, in the sense that the YES/NO answers to the questions of whether a vertex $v$ is in the MIS or not are the same as those of one execution of RGMIS. In RGMIS, we process the vertices according to a random order $\pi$ and add each vertex to the MIS if and only if none of its neighbors that appeared earlier in the order $\pi$ have been added to the MIS. Nguyen and Onak's local algorithm answers each question in expected $2^{O(\Delta)}$ queries to the graph, where the expectation is over the randomness of the order $\pi$.  

Nguyen and Onak used this approach as a local computation for maximal matching\footnote{Maximal matching is a special (and simpler) case of the maximal independent set problem, on line graphs.}. Through this, they directly obtained a sublinear-time approximation algorithm for the size of maximal matching (and some other related problems such as approximate minimum vertex cover), by invoking the local computation on a small number of randomly selected vertices and determining what fraction of them are matched. It is worth noting that this itself was inspired by, and an improvement on, the insightful work of Parnas and Ron~\cite{parnas2007approximating}, who had developed a sublinear time approximation algorithm for the size of minimum vertex cover by devising a local computation procedure for it (via simulating a local distributed algorithm of that problem).

Yoshida, Yamamoto, and Ito~\cite{yoshida2009improved} provided an ingenious analysis for a variant of the approach of Nguyen and Onak, which improved the bound to expected $O(\Delta)$, though in a certain weaker sense: for a random permutation $\pi$ used in RGMIS and for a \underline{random vertex} $v$, the expected query complexity to determine whether $v$ is in the MIS or not is $O(\Delta)$. This bound holds also if one replaces $\Delta$ with average degree $\bar{\Delta}$. However, for some vertices, even under a random permutation, the expected number of queries to answer the question might be much larger. This weaker guarantee was still sufficient for the sublinear-time approximation application because there one deals with questions about random vertices. See \cite{onak2012near, behnezhad2021sublinear} for other improvements on these sublinear-time size approximation algorithms.

\bigskip
\paragraph{Category II (Query Complexity For All Vertices)} The algorithms discussed in the first category have bounds on the expected query complexity of each node (or just a random node). A stronger notion, which is indeed the prevalent measure of interest in the study of LCAs, is to have an upper bound on the query complexity of all nodes. This bound itself may hold in expectation, or even better, with high probability\footnote{We use the phrase \emph{with high probability (w.h.p.) to indicate that an event takes place with probability at least $1-1/n^{c}$ for a fixed constant $c\geq 2,$ which can be set desirably large by adjusting other constants.}}. Notice that the naive bound that follows directly from the per-vertex expected query complexity, via Markov's inequality and a union bound over all vertices, includes an $n$ factor, and is thus uninteresting in the area of sublinear algorithms\footnote{Here, unlike standard centralized computations, we cannot simply re-run the algorithm for $O(\log n)$ iterations, stopping each iteration after say twice the expected time, and use this to turn our expected complexity into a bound that holds with probability $1-1/n$. This is because, which of those $O(\log n)$ executions would be used is a non-local decision and depends on the questioned vertex, and if we do not fix one choice for all vertices, the output at different vertices could be inconsistent.}.

Rubinfeld et al.~\cite{rubinfeld2011LCA} and Alon et al.~\cite{alon2012LCA}, in their work that introduced the model, provided an MIS LCA with query complexity of $\Delta^{O(\Delta \log \Delta)} \poly(\log n)$. The algorithm is based on Luby's classic parallel/distributed MIS algorithm~\cite{luby1985simple}. This LCA also uses a random string of the same length (and no other memory, besides these). Concretely, when questioned about any vertex $v$, the algorithm performs only $\Delta^{O(\Delta \log \Delta)} \poly(\log n)$ queries to the input graph (and the randomness), with high probability. A natural question that has remained open since then, and was raised frequently by Rubinfeld (e.g., her 2016 TCS+ talk \href{https://youtu.be/R8J61RYaaDw?t=2164}{https://youtu.be/R8J61RYaaDw?t=2164}, her 2017 CICS Distinguished Lecture \href{https://youtu.be/fczTaR-KSd8}{https://youtu.be/fczTaR-KSd8}, and her Keynotes Talks at ICALP 2017 and PODC 2019) and others (see, e.g., \cite{mansour2012converting, levi2017local, ghaffari2019sparsifying, behnezhad2021sublinear}) is this: 
  
\begin{center}
\begin{minipage}{0.65\textwidth}
\emph{Is there an LCA for MIS with $\poly(\Delta \log n)$ query complexity?}
\end{minipage}
\end{center}
\smallskip

There has been a sequence of improvements. The complexity was improved to $2^{O(\Delta)} \poly(\log n)$ by Vardi and Reinhold~\cite{reingold2016new}, by showing that a variant of the Nguyen-Onak algorithm has this query complexity (for all vertices); that is, the query complexities are concentrated around the $2^{O(\Delta)}$ expected bound shown by Nguyn and Onak. Later, the bound was improved by devising new algorithms: Levi, Rubinfeld and Yodpinyanee\cite{levi2017local} achieved a query complexity of $\Delta^{O(\log^2 \Delta)} \poly(\log n)$; Ghaffari achieved a query complexity of $\Delta^{O(\log \Delta)} \poly(\log n)$; and finally Ghaffari and Uitto\cite{ghaffari2019sparsifying} improved the query complexity to $\Delta^{O(\log\log \Delta)} \poly(\log n)$. 

However, all these bounds remain quite far from the $\poly(\Delta \log n)$ target. Indeed, to the best of our knowledge, even the simpler question of whether one can obtain a local MIS algorithm that, for \underline{each} questioned vertex, has expected query complexity $\poly(\Delta)$---instead of ``for a \underline{random} vertex", as provided by Yoshida et al.~\cite{yoshida2009improved}---has remained open.

\paragraph{Side Remark} If we limit ourselves to graphs with constant or slightly super constant degrees, and focus instead on the dependency on $n$, then a better bound follows from the algorithm of Even, Medina, and Ron\cite{even2018best}, which has a query complexity of $\Delta^{O(\Delta^2)} \log^* n$, later improved to $\Delta^{O(\Delta)} \log^* n$\cite{levi2017centralized}. These bound are sublinear in $n$ only for graphs with $\Delta = o({\log n/\log\log n})$.

\subsection{Our Contribution}
We present a randomized local computation algorithm that resolves the aforementioned question and achieves a $\poly(\Delta \log n)$ query complexity. We also note that the expected query complexity for each questioned vertex is $\poly(\Delta)$.
\begin{theorem} There is a Local Computation Algorithm that computes an MIS such that, when questioned about any vertex $v$, it answers whether $v$ is in the MIS or not using expected $\poly(\Delta)$ queries, and at most $\poly(\Delta \log n)$ with probability at least $1-1/\poly(n)$. More concretely, for any $\eps>0$ and any vertex $v$, with probability at least $1-\eps$, the algorithm uses at most $\poly(\Delta) \cdot \log (1/\eps)$ queries to answer the question about $v$ (if asked). Furthermore, the algorithm can be transformed into one with $\poly(\Delta \log n)$ query complexity and using only $\poly(\Delta \log n)$ bits of randomness.
\end{theorem}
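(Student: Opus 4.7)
The plan is to combine two classical lines of attack from the literature---LCA simulation of a distributed MIS algorithm and Nguyen-Onak-style simulation of the Randomized Greedy MIS (RGMIS)---with a new sparsification layer that avoids the $\Delta^T$ blow-up incurred by naively simulating a $T$-round distributed algorithm. Specifically, I would structure the algorithm in two stages. Stage~1 is a distributed-style phase that simulates a modern MIS algorithm (in the spirit of Ghaffari's) for $T = O(\log \Delta)$ rounds; the crucial modification is that at each round, a vertex's decision depends only on a random $\poly(\Delta)$-size sample of its neighborhood together with previously computed quantities, so that the per-round LCA exploration does not blow up multiplicatively in $\Delta$. The shattering property guarantees that after Stage~1, the undecided vertices induce connected components of size $\poly(\Delta) \cdot \log n$ with high probability. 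Stage~2 then resolves a queried undecided vertex by fully exploring its component and running a deterministic canonical MIS inside it (e.g., lexicographic greedy), adding at most $\poly(\Delta \log n)$ queries in the worst case.

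For the expected per-vertex bound, I would argue that when answering the query at $v$, the exploration tree generated by Stage~1 behaves like a subcritical branching process: each potential recursive call to a neighbor $u$ is activated only when $u$ precedes $v$ in the random permutation (or its random priority exceeds a threshold set by the explored structure), which happens with small probability conditional on what has been revealed so far. This should yield an expected number of probes that is $\poly(\Delta)$ uniformly for every vertex $v$, strengthening the Yoshida-Yamamoto-Ito average-case bound to a per-vertex bound.

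The concentration claim---that $\poly(\Delta) \cdot \log(1/\eps)$ queries suffice with probability $1-\eps$---would follow from an exponential tail on the branching process, which I would establish by a recursive moment-generating-function bound on the depth and breadth of the exploration tree. Combined with the $\poly(\Delta) \cdot \log n$ bound from Stage~2's residual handling, this yields the overall $\poly(\Delta \log n)$ worst-case bound with probability $1-1/\poly(n)$. The main obstacle will be verifying that the sparsification step in Stage~1 does not break the shattering guarantee inherited from the underlying distributed MIS algorithm while simultaneously keeping per-round exploration bounded by $\poly(\Delta)$; balancing these two requirements is the key structural lemma the proof will hinge on. Finally, the reduction to $\poly(\Delta \log n)$ bits of randomness would follow by replacing the truly random bits accessed by the LCA with a Nisan-style pseudorandom generator or a sufficiently high-independence hash family, since each query touches only a local window of the random string whose size can be bounded by the worst-case query complexity already proven.
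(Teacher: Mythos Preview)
Your high-level architecture---simulate a Ghaffari-style distributed algorithm for $T=O(\log\Delta)$ rounds, invoke shattering, then clean up residual components deterministically---matches the paper. But the mechanism you propose for avoiding the $\Delta^T$ blow-up is not the one that works, and as stated it has two concrete gaps.

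First, ``at each round, a vertex's decision depends only on a random $\poly(\Delta)$-size sample of its neighborhood'' does not give a $\poly(\Delta)$ simulation. Even if each round touches only $\poly(\Delta)$ neighbors, recursing over $T=\Theta(\log\Delta)$ rounds multiplies these factors and lands you back at $\Delta^{\Theta(\log\Delta)}$. The paper's actual idea is not per-round sparsification but an \emph{amortized} causal-influence bound: it designs the \local algorithm so that whenever a node must read $2^{\Theta(k)}$ messages in round $t$, those messages were sent $\Theta(k)$ rounds earlier. This is engineered via explicit ``relevant neighbor'' sets $N_{t}(v)$ (neighbors whose fixed random values could possibly make them marked in round $t$) that are refined round by round, together with a \emph{sleeping} rule that suspends $v$ for a stretch of rounds whenever some $N_{t}(v)$ is too large relative to a time-dependent threshold. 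The resulting recursion $|Q(t)|\le |Q(t-1)|+\sum_{s<t} (2^{O(t-s)}+K)|Q(s)|$ solves to $\Delta\cdot 2^{O(T)}=\poly(\Delta)$, and this bound is \emph{deterministic}---all the randomness is fixed up front and used only to define the $N_t(v)$ sets, not to subsample them. Your branching-process/MGF machinery is therefore aimed at the wrong target: there is no stochastic exploration tree to control in Stage~1.

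Second, your per-vertex subcriticality claim is precisely the open problem. Yoshida--Yamamoto--Ito gives subcriticality only for a \emph{random} starting vertex; promoting this to every vertex is what had resisted proof, and nothing in your outline explains why conditioning on the revealed structure keeps the branching ratio below~$1$ uniformly in $v$. Relatedly, a genuine random sample of neighbors risks missing a marked neighbor and admitting two adjacent vertices into the set; the paper avoids this because $N_t(v)$ deterministically contains every neighbor that can be marked in round $t$, so correctness of the independent set is never at stake. The probabilistic analysis in the paper is confined to showing that sleeping decisions are rarely ``mistakes'' and that enough good rounds survive, which yields the near-maximality and hence the shattering bound; the $\log(1/\eps)$ in the query complexity comes solely from the residual component size, not from any tail bound on Stage~1.
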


We remark that the constant in the exponent of our current bound $\poly(\Delta)$ is not small. We did not attempt to optimize that constant in this writing. We instead prioritized the simplicity of the algorithm and its analysis. It remains an interesting question whether one can achieve a better---e.g., linear or quadratic---dependency on $\Delta$.

MIS's centrality in this area is in part due to the fact that many other problems can be easily reduced to it. Using well-known reductions, we get local computation algorithms with the same asymptotic complexity for maximal matching, $2$-approximate vertex cover, $(\Delta+1)$ vertex coloring, and even its strengthening to $(\mathsf{deg}+1)$ list coloring where each node $v$ should choose its color from a prescribed list of $degree(v)+1$ many colors, and of course its special case of $(2\Delta-1)$ edge (list) coloring. 

\subsection{A High-Level Outline of Our Method}\label{subsec:outline}
Similar to many of the prior work in local computation algorithms, and following a connection first used by Paras and Ron\cite{parnas2007approximating} (see also Lovasz~\cite[Section 22.3]{lovasz2012large}), the core ingredient of our LCA is to simulate a local distributed algorithm for (nearly) maximal independent set. Let us first explain this connection. In a distributed algorithm, initially, nodes do not know the topology of the graph, and per round each node of the graph talks to all of its neighbors. Because of this, it can be shown that a node's behavior after $T$ rounds is a function of only the information initially available to the nodes within its distance $T$. Since the number of the latter nodes is at most $\Delta^{T}$, this indicates that any $T$-round local distributed algorithm can be turned into an LCA with query complexity $\Delta^{T}$. 

This connection, per se, is insufficient for obtaining a $\poly(\Delta)$ query LCA. By a celebrated lower bound of Kuhn, Moscibroda, and Wattenhofer\cite{kuhn2016local}, any distributed algorithm for MIS needs round complexity $\Omega(\log \Delta/\log\log \Delta)$. Hence, by only following the Parnas-Ron principle of directly simulating a distributed algorithm in a black-box fashion, we cannot obtain a query complexity below $\Delta^{\Omega(\log \Delta/\log\log \Delta)}$. 

To obtain our LCA, we use an adaptation of the local distributed algorithm of \cite{ghaffari2016MIS} to compute a near-maximal independent set (the adjustment to an MIS is easy and standard). We adapt the algorithm in such a way that, while maintaining its $T=\Theta(\log \Delta)$ round complexity, it has additional properties that allow us to simulate it using an LCA with query complexity of $\Delta\cdot 2^{\Theta(T)}$. We emphasize that this is a deterministic guarantee; the probabilistic aspects are in the near-maximality of the computed independent set. To describe the key new property, let us use an informal definition of \emph{(causal) influence}: let us say node $u$ is causally influenced by node $v$ if node $u$ reads a message of node $v$ or if $u$ reads a message of another node $w$ who is influenced by node $v$. The former is a direct influence and the latter is an indirect influence. The reason for the naive $\Delta^T$ bound in LCA simulations of $T$-round distributed local algorithms is that per round each node reads the messages of (up to) $\Delta$ neighbors, and these factors multiply over different rounds. To reach a query complexity of $\poly(\Delta)$, we adapt the local distributed algorithm in such a way that in ``most" of the rounds (this is an oversimplification, but provides the right intuition), each node reads the messages of only few neighbors. Ideally, this would have been reading the messages of only $\Theta(1)$ neighbors per round. But we do not achieve that. Instead, our adaptation is such that if in a given round a node needs to read the messages of many, say $2^{\Theta(k)}$ neighbors, this is needed only for messages that were sent some $\Theta(k)$ rounds earlier. Because of this, in a sense, we can amortize the high $2^{\Theta(k)}$ increase factor in the influence to the elapsed $\Theta(k)$ time. This allows us to show that overall, each node is influenced by only $\Delta \cdot 2^{\Theta(T)}$ other nodes. The LCA will be able to simulate the local algorithm basically by (adaptively) tracing which messages need to be read, and that brings the query complexity down to $\Delta \cdot 2^{\Theta(T)} = \poly(\Delta)$. 

We note that the best previously known query complexity $\Delta^{O(\log\log \Delta)} \poly(\log n)$\cite{ghaffari2019sparsifying} also was based on a (different) adaptation of the algorithm of \cite{ghaffari2016MIS}. Moreover, somewhat similar properties turned out to be useful for parallel algorithms~\cite{ghaffari2019sparsifying, ghaffari2021PRAMMIS}. The LCA of \cite{ghaffari2019sparsifying} simulated the local distributed algorithm by breaking the $\Theta(\log \Delta)$ rounds of the local algorithm into batches and ensuring that the simulation is done in such a way that the degree per batch drops to at most exponential in the batch length. For instance, in the basic version of the algorithm where there is only one scale, we have $\Theta(\sqrt{\log \Delta})$ batches each of length $\sqrt{\log \Delta}$ rounds, and the simulation ensures that the degree per batch drops to $2^{\Theta(\sqrt{\log \Delta})}$. However, this approach comes at a cost and the best achievable query complexity is $\Delta^{O(\log\log \Delta)}$. In particular, even in the most efficient version, which includes $\Theta(\log\log \Delta)$ scales of batches, where the $i^{th}$ scale batches are of length $2^{i}$, we lose a $\Delta$ factor in the query complexity per scale. That leads to the query complexity of $\Delta^{O(\log\log \Delta)}$. To achieve a $\poly(\Delta)$ algorithm, here, we directly design the distributed local algorithm in such a way that each node is influenced (through chains of read messages) by at most $\poly(\Delta)$ nodes. 

We are hopeful that techniques similar to our approach can be developed for other graph problems, by devising local distributed algorithms that read only few messages in most rounds and thus have much lower causal influence per node than the naive $\Delta^{T}$ bound. This would provide a (hopefully general) recipe for achieving (centralized) local computation algorithms with small query complexity, well below the Parnas-Ron direct simulation principle\cite{parnas2007approximating}.

Finally, we note that there is some superficial similarity between one numerical aspect of our analysis and that of a matching approximation of Kapralov et al.\cite{kapralov2020space} in the streaming model and the local computation algorithms model. Very roughly speaking, in both places, we effectively make the number of neighbors on which a recursion is invoked to be a constant (in an amortized sense). To the best of our understanding, the similarity stops at this level. In particular, the algorithm of Kapralov et al.\cite{kapralov2020space} provides no solution for the MIS problem, and not even for the much simpler special case of MIS on line graphs, i.e., the \emph{maximal} matching problem. The strict maximality requirement makes the MIS problem much harder technically.

\subsection{Preliminaries}\label{subsec:prelim}
\vspace{-0.3cm}
\paragraph{Distributed LOCAL model\cite{Linial1992LocalityID}} Consider an undirected graph $G=(V, E)$ with $n=|V|$ where there is one processor on each node of the graph. Each processor/node has a unique identifier in $\{1, \dots, n^{10}\}$. Initially, each node knows only its neighbors. Per round, each node can send one arbitrary-size message to all of its neighbors and receives their messages (we discuss which messages are read, soon). In the end, each node $v$ should know its own part of the output. For instance, in the MIS problem, node $v$ should know whether $v$ in the computed MIS or not.

\paragraph{Reading Messages Selectively, and Influence} In this paper, in order to devise a \local-model algorithm suitable for efficient LCA simulation, we take some extra care in how the communications of the \local model are performed. Per round, each node sends its message to all of its neighbors. However, on the receiving end, each node will choose which of the received messages it reads. In particular, each node will have a set of ``relevant neighbors" for each round (which was determined in earlier rounds) and it will read only the content of messages received from those relevant neighbors. So, the node discards the content of the messages from other neighbors and even ignores whether any of the other neighbors sent a message or not in that round. If a node $v$ has read a message of a neighbor $u$, we say that $u$ has \emph{influenced} $v$ (directly). If node $v$ reads a message of a neighbor $u$ who is influenced by a node $w$, we say node $v$ is influenced by node $w$. 

\paragraph{Notations} We use $\Gamma(v)$ to denote the set of neighbors of node $v$ in graph $G$, and $\Gamma^{r}(v)$ for $r\geq 1$ to denote the set of all nodes within distance at most $r$ from $v$. Notice that $\Gamma^{1}(v)=\Gamma(v)\cup \{v\}$. We also use $\Gamma(I)$ for a set $I\subset V$ of vertices to indicate all vertices that have a neighbor in $I$, i.e., $\Gamma(I) = \cup_{v\in I} \Gamma(v)$.  For a positive integer $T$, we use $[T]$ as a shorthand for $\{1, 2, \dots, T\}$.

\section{The Algorithm}
In \Cref{subsec:LOCALalgo}, we first describe a near-maximal independent set algorithm in the LOCAL model of distributed computing. Then, in \Cref{subsec:observations}, we discuss some observations and properties of this algorithm. Afterward, in \Cref{subsec:LCA}, we discuss how, thanks to these properties, the algorithm can be simulated efficiently in a sequential manner, in the setting of Local Computation Algorithms. Then, in \Cref{subsec:Analysis}, we present the analyses which prove this efficiency and that the algorithm indeed computes a near-maximal independent set. We conclude in \Cref{subsec:completeLCA} by discussing how this algorithm can be turned into a maximal independent set algorithm.
\subsection{The LOCAL Algorithm}
\label{subsec:LOCALalgo}
\paragraph{General Outline} As the core ingredient, we describe a randomized algorithm that computes a near-maximal independent set $I$. The near-maximality is in the sense that if we remove set $I$ and its neighbors $\Gamma(I)$ from the graph, with probability $1-1/\poly(n)$, in the subgraph induced by the remaining vertices $V\setminus (I \cup \Gamma(I))$, each connected component has at most $\poly(\Delta) \log n$ vertices. Moreover, each node $v$ is in this remaining part with probability at most $1/\poly(\Delta)$. Because of the former property, it will be easy to turn this near-maximal independent set algorithm into a maximal independent set algorithm, with only a factor of $\poly(\Delta) \log n$ increase in the number of queries, simply by identifying the remaining component of the questioned vertex and using the deterministic greedy MIS procedure on this small component. This is explained in \Cref{subsec:completeLCA}. In fact, for each vertex $v$ and any $\eps>0$, with probability $1-\eps$, the component of $v$ has size at most $\poly(\Delta)\log(1/\eps)$. Thus, for each vertex $v$, with probability $1-\eps$, the query complexity of the LCA when questioning vertex $v$ is at most $\poly(\Delta)\log(1/\eps)$. 

To provide the aforementioned near-maximal independent set algorithm, we present a $T$-round randomized algorithm in the LOCAL model of distributed computing that computes a near-maximal independent set with the same guarantees as mentioned above, where $T=\Theta(\log \Delta)$. The base of this algorithm is an adaptation of Ghaffari's $T$-round algorithm~\cite{ghaffari2016MIS}. The differences will be in how we adapt the algorithm so that it can be simulated with only $\poly(\Delta)$ query complexity in the LCA model, unlike the $\Delta^{\Theta(\log \Delta)}$ query complexity of the algorithm of~\cite{ghaffari2016MIS}. We will discuss those differences later, after reviewing the common and basic ingredients. 

The basic operation in both algorithms is a probabilistic marking process, which determines the nodes that attempt to join the independent set in one round. We next discuss this basic marking process and mention how we will determine all the randomness used for this process at the very beginning of the algorithm. We will then  discuss how our algorithm departs from that of ~\cite{ghaffari2016MIS} and leads to a $\poly(\Delta)$ query complexity in the LCA model.

\paragraph{Marking process} The basic operation in each round of the algorithm is a randomized marking process, which determines the nodes that attempt to join the independent set $I$. Concretely, per round $t$ each node $v$ is marked with a certain probability $p_{t}(v)$. Then, in that round, if node $v$ is marked and none of its neighbors is marked, node $v$ gets added to the independent set $I$, and we remove node $v$ and all of its neighbors from the graph. 

Each node $v$ starts with a marking probability $p_{1}(v)=1/2^{\lceil\log \Delta \rceil+1}$; this is the marking probability in the first round. During the algorithm, in each round $t\in [1, T]$, we will set either $p_{t+1}(v)\gets p_{t}(v)/2$ or $p_{t+1}(v)\gets \min\{2p_{t}(v), 1/2\}$. We will later discuss how this decision is done, in the main body of the algorithm. Let us now remark a smoothness property of this probability: the probability $p_{t}(v)$ changes by at most a $2$ factor per round, and concretely, for any two rounds $t, t'\in [T]$ where $t'\leq t-1$, we have $p_{t}(v) \leq p_{t'} \cdot 2^{(t-t')}$. 

Next, we explain our way of viewing the randomness used in this marking process, and how we fix it at the beginning of the algorithm.

\paragraph{Fixing the randomness of marking} At the beginning of the algorithm, for each node $v\in V$ and each round $t\in [1, T]$, we choose a uniformly random number $\rho_{t}(v)\in [0,1]$, independently of all other vertices and all other rounds. We note that a random number with $\Theta(\log \Delta)$ bits of precision suffices. The interpretation is this: suppose we want to mark node $v$ with a given probability $p_{t}(v)\in [0, 1]$ in round $t$. Then, we consider $v$ marked in round $t$ if and only if $\rho_t(v)\leq p_{t}(v)$. We fix all the randomness at the beginning of the algorithm. Furthermore, this is the only randomness that we will use in the algorithm. Intuitively, having fixed all the randomness at the beginning, we are dealing with a deterministic process. This determinism is helpful when we want to devise an LCA simulation of this LOCAL algorithm and discuss that the LCA performs the same computation and outputs the same independent set.

\medskip

\paragraph{Intuitive discussions---query complexity in LCA simulation of the LOCAL algorithm} 
As mentioned before, our near-maximal independent set algorithm is a $T$-round process in the \local model. We devise the algorithm in such a way that it can be simulated efficiently in the LCA model, with query complexity $\poly(\Delta) \ll \Delta^{T}$. We emphasize that this query complexity statement is deterministic. The probabilistic aspect will be only in the near-maximality guarantee. 
 
Concretely, we want that the behavior of each node depends on only $\poly(\Delta)$ other nodes; this allows us to simulate the algorithm in the LCA model with a $\poly(\Delta)$ query complexity. Recalling the definition of influence from \Cref{subsec:outline} and \Cref{subsec:prelim}, the new $T$-round algorithm is devised such that each node is influenced by only $\Delta \cdot 2^{\Theta(T)}$ other nodes. Informally, the $2^{\Theta(T)}$ factor accounts for an increase of a constant factor per round, which comes roughly from each node being influenced directly by constant many of its neighbors. The $\Delta$ factor is the influence in the very first round, as in that round each node checks all the messages received from all of the neighbors (e.g., to initialize the sets of ``relevant neighbors", which we will discuss soon).

In particular, unlike in the original algorithm\cite{ghaffari2016MIS} where each node reads the messages of all of its up to $\Delta$ neighbors in each round, we would like to make nodes read messages of only few neighbors per round. As mentioned above, this would be ideally just a constant number of neighbors per round, with the exception of the very first round where we allow reading messages of all the up to $\Delta$ neighbors. We do not achieve this exact property, but we get it in a certain amortized sense: The new distributed algorithm is devised such that if $v$ has to read many messages in round $t$, say $2^{\Theta(k)}$ many messages for some number $k$, then these should be messages that were sent much earlier, in rounds before $t-\Theta(k)$. Said differently and informally, the algorithm is devised such that, if there are many neighbors of node $v$, say $2^{\Theta(k)}$ that are attempting to join the independent set in a given round, then node $v$ can sleep for $\Theta(k)$ rounds without reading their messages. The concrete way through which we achieve this is by building a certain set $N_{t}(v)$ of \emph{relevant neighbors} for round $t$, which is initialized to include all nodes that might get marked in round $t$ and then refined over time as we get closer to round $t$, to smaller and smaller sets. If at some point $N_{t}(v)$ is ``too large" compared to the related size threshold, then node $v$ sleeps for a number of rounds including round $t$. 

Next, we describe the algorithm. Afterward, in \Cref{subsec:observations}, we present some observations regarding the size of $N_{t}(v)$ and the number of messages read in each given round sent in each earlier round. These observations formalize the above intuitive statements. We later use these properties in \Cref{subsec:LCA} to argue that we can develop an LCA simulation of the near-maximal indepdendent set algorithm of the \local model with query complexity $\poly(\Delta)$.

\medskip
\paragraph{The Algorithm---Initialization}
We next describe the LOCAL algorithm for a near-maximal independent set. We first discuss round 1 which involves a special initial preparation for all the future rounds $[1, T]$.  We comment that, in round $1$, each node $v$ sends and receives messages to and from all of its up to $\Delta$ neighbors, and reads all the received messages.

\smallskip
\begin{mdframed}
\textbf{Initialization (in round 1)--Determining Relevant Neighbors for Future Rounds.} In each round of the algorithm, each node $v$ will read the messages sent by only a selected subset of its neighbors. In the beginning of round $1$, node $v$ creates a \emph{set of relevant neighbors} $N_{t}(v)$ for each future round $t\in [1, T]$. In particular, in round $1$, we build the set $N_{t}(v)$ for each round $t\in [1, T]$ as follows: Each node $u$ sends $\rho_{t}(u)$ for each round $t\in [1, T]$ to each of its neighbors. Each node reads all received messages in this round. So, each node $v$ learns $\rho_{t}(u)$ for each neighbor $u\in \Gamma(v)$ and for each round $t\in [1, T]$. Then, node $v$ includes each neighbor $u\in \Gamma(v)$ in the set $N_{t}(v)$ if and only if $\rho_{t}(u) \leq p_{1}(u) \cdot 2^{(t-1)} \leq \frac{2^{t-1}}{2\Delta}$. Notice that since $p_{t}(u) \leq p_{1}(u) \cdot 2^{(t-1)}$, and given how we use the random numbers $\rho_{t}(u)$ for the marking decision, this preliminary set $N_{t}(v)$ includes all neighbors of $v$ that can be marked in round $t$.

By default, we would think that node $v$ is allowed to read messages from all neighbors in $N_{t}(v)$ during round $t$. However, these sets $N_{t}(v)$ might be too large and that would require node $v$ to read messages from many neighbors (e.g., to see whether any of them is marked). To circumvent that, we do as follows: for each round $t\in [1,T]$, if we have $|N_{t}(v)| > 2^{5(t-1)}+K$, then let $z$ be the greatest integer such that $|N_{t}(v)| > 2^{5(t+z-1)}+K$. In that case, we declare node $v$ \emph{sleeping} for rounds $[t, t+z]$. Here, $\delta\in(0, 0.01)$ is a desirably small positive constant and $K = 20\log( 1/\delta)$ is a large constant. 

\medskip
\noindent \textbf{Comment 1:} We see later how sleeping affects the algorithm. For now, consider this intuition: a node sleeping in a given round will not attempt to be added to the independent set. Therefore, it also does not need to read in that round whether its neighbors are marked (the node will read this information later, once it is not sleeping anymore, to see if any of those neighbors joined the independent set or not).

\smallskip
\noindent \textbf{Comment 2:} In the course of the algorithm, until we reach round $t$, we will further refine the set $N_{t}(v)$ in a gradual manner. This will be explained soon in the main body of the algorithm. Also, this will happen for all such sets $N_{t}(v)$ for different $t$ simultaneously. We emphasize that we will never add any new vertex to the set $N_{t}(v)$, but we may remove vertices from it.
\hspace*{5mm} 
\end{mdframed}
\bigskip

\paragraph{The Algorithm---Main Body} We next describe the main part of the algorithm, which happens during rounds 1 to $T$. Some intuition or remarks are provided in the footnotes.
\smallskip
\begin{mdframed}
\paragraph{Main body of the near-maximal independent set algorithm (rounds $1$ to $T$)} Let $\delta\in(0, 0.01)$ be a desirably small positive constant, and $K = 20\log_{2}(1/\delta)$; intuitively, we think of $K$ as a large enough constant. We describe the process in a round $t\in [1,T]$ for a node $v$, which involves four steps. All vertices perform the process of round $t$ simultaneously.

\begin{itemize}
    
    \item\textbf{Step 1:} Suppose that $v$ is not sleeping in round $t$. If node $v$ was sleeping for rounds $[t', t-1]$ and is not sleeping in round $t$, then node $v$ reads the messages sent by its neighbors in $N_{r}(v)$ for any $r\in [t', t-1]$. If any such neighbor $w\in N_{r}(v)$ had joined the independent set $I$, then $v$ is considered dead, it stops participating in the rest of the algorithm and sends a message to all of its neighbors describing that $v$ is dead. 
    
    \item\textbf{Step 2:} In round $t$, for each round $t''\in [t, T]$ such that $v$ has not been declared sleeping in round $t''$, we do as follows\footnote{We emphasize that this this step happens even if node $v$ is declared sleeping for the current round $t$.} (sequentially, going over different rounds $t''$ one by one): 
    
        \begin{itemize}
            \item For each earlier round $r$ from $t-2(t''-t)$ to $t-(t''-t)-1$, node $v$ does the following\footnote{The reason for the choice of this $[t-2(t''-t), t-(t''-t)-1]$ interval will become clearer when we discuss the size of the relevant neighbor set in \Cref{clm:NeighborsIfNOTASLEEP}; we use this to bound the number of messages that a node reads in \Cref{clm:messagesRead}, and later in the simulation of the algorithm in the LCA model, in \Cref{lem:LCA}. Intuitively, we need that this interval has length $\Theta(t''-t)$, so that the guarantee that we have about the size of $N_{t''}(v)$ from checks in earlier rounds (as $v$ is not declared sleeping for round $t''$) allows us to read all messages sent in these rounds while having a bound on $|N_{t''}(v)|$ that is $2^{O(t''-t)}$ + K. On the other hand, we could end the interval somewhat later, e.g., in round $t-1$, but we would not gain anything asymptotically, because even after reading messages sent in round $t-1$, the best bound that we would have on the size of $|N_{t''}(v)|$ would be $2^{ 5(t''-t)}$ + K.} (sequentially, going over different rounds $r$ one by one):
                \begin{itemize}
                \item Node $v$ reads the messages that were sent during round $r$ by the neighbors in the set $N_{t''}(v)$. We emphasize that this is done only if node $v$ has not been declared sleeping for this round $t''$, so we know that $N_{t''}(v)$ cannot be too large--- the exact bounds are discussed later. Then, node $v$ checks and refines $N_{t''}(v)$ by keeping in  $N_{t''}(v)$ only vertices $u$ such that $u$ is not dead and for which $\rho_{t''}(u) \leq p_{r}(u) \cdot 2^{(t''-r)}$. 
                
                In the end of processing all messages sent by neighbors in  $N_{t''}(v)$ during round $r$, if we still have $|N_{t''}(v)| > 2^{5(t''-r)} + K$, then we declare node $v$ sleeping in rounds $[t'', t''+(t''-r)]$ and we end this for loop, thus proceeding to the next value of $t''$. Otherwise, we proceed to the next value of $r$. 
            \end{itemize}
        \end{itemize}
        
    \item\textbf{Step 3:} Suppose that node $v$ is sleeping in round $t$. Then, we call it marked in this round iff $\rho_{t}(u) \leq p_{t}(v)$. Moreover, we set $p_{t+1}(v)\gets p_{t}(v)/2$, send $p_{t+1}(v)$ to the neighbors, and proceed to the next round, thus skipping step 4. 
    
    \item\textbf{Step 4:} Suppose that node $v$ is not sleeping in round $t$. Then, we consider node $v$ marked in round $t$ iff $\rho_{t}(v)\leq p_{t}(v)$. Node $v$ sends a message declaring whether it is marked or not to its neighbors. If node $v$ is marked and no neighbor in $N_{t}(v)$ is marked, then node $v$ joins the independent set $I$ and sends a message to all of its neighbors describing that $v$ joined the independent set $I$. Suppose $v$ did not join the independent set $I$ (either because it was not marked, or because it had a marked neighbor). Then, if any neighbor in $N_{t}(v)$ was marked, we set $p_{t+1}(v)\gets p_{t}(v)/2$, and otherwise we set $p_{t+1}(v)\gets \min\{2p_{t}(v), 1/2\}$. Node $v$ sends a message to its neighbors describing the value of $p_{t+1}(v)$.       
\end{itemize}
\end{mdframed}
\medskip
\subsection{Observations regarding the number of messages read per round}\label{subsec:observations}
The following two observations capture a crucial property of the algorithm, which is one of the key reasons for the perhaps strange design of the LOCAL algorithm. Informally, they show that the only time that a node needs to read many messages in a particular round $t$, say $2^{\Theta(k)}$ many messages, is when it is reading messages of a much earlier round $t-\Theta(k)$. Said differently, in a round $t$, we need to reach only constant messages from the previous round, or indeed the previous $O(1)$ rounds, and more generally at most $2^{\Theta(k)}$ messages from an earlier round $t-\Theta(k)$. Notice that this is quite different than usual LOCAL algorithms, where per round $t$, each node should read up to all the $\Delta$ messages sent by the neighbors even in round $t-1$ (and tracing those messages naively suggest that, in round $t$, the node is influenced by up to $\Delta^t$ nodes overall, directly or indirectly). One can see that, because of the new bound, the number of nodes that influence each particular node can be bounded by $\poly(\Delta) \cdot 2^{\Theta(T)}$. Instead of formally discussing this point, we later see how these observations allow us to obtain an LCA with $\poly(\Delta) \cdot 2^{\Theta(T)}$ query complexity for simulating the above $T$ round algorithm. 

\begin{claim}\label[claim]{clm:NeighborsIfNOTASLEEP}
At the end of round $t$, the set of relevant neighbors $N_{t''}(v)$ for a future round $t''> t+1$ has size at most $2^{10(t''-t+1)}+K$, unless $v$ has been declared sleeping for round $t''$. We emphasize that this statement is deterministic.
\end{claim}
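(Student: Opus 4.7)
My plan is to prove the claim deterministically by tracking precisely when $N_{t''}(v)$ can shrink. After its creation in the initialization of round $1$, the set only ever loses vertices, and that happens exclusively inside Step 2 of some round. I would first collect the two guarantees the algorithm itself builds in: (a) at initialization, the sleep rule forces $|N_{t''}(v)| \leq 2^{5(t''-1)}+K$ whenever $v$ is not declared sleeping for $t''$; and (b) whenever Step 2 of some round $t' \leq t$ finishes processing an earlier round $r$ for the purpose of refining $N_{t''}$ without triggering a sleep declaration, the refined set satisfies $|N_{t''}(v)| \leq 2^{5(t''-r)}+K$. Because we are assuming $v$ is not sleeping for $t''$ at the end of round $t$, no sleep was ever triggered for $t''$, so every $r$ in every Step-2 range has been processed successfully and each resulting (b)-bound is valid. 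Since $r \mapsto 2^{5(t''-r)}$ is decreasing, the tightest such bound is obtained at the largest processed $r$.

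Next I would identify that largest $r$. In round $t'$ the Step-2 range for $t''$ is $[t'-2(t''-t'),\, t'-(t''-t')-1]$, whose upper end $2t'-t''-1$ is monotonically increasing in $t'$. Hence, if any $r \geq 1$ has been processed by the end of round $t$, the largest such $r$ is $2t-t''-1$, processed in round $t$ itself; and $2t-t''-1 \geq 1$ iff $t \geq (t''+2)/2$. This dichotomy gives two clean cases.

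Case 1 ($t \geq (t''+2)/2$): the loop in Step 2 of round $t$ for $t''$ ran to completion without triggering sleep, so the final processed $r=2t-t''-1$ gives, via (b),
\[
|N_{t''}(v)| \leq 2^{5(t''-(2t-t''-1))}+K \;=\; 2^{10(t''-t)+5}+K \;\leq\; 2^{10(t''-t+1)}+K.
\]
Case 2 ($t < (t''+2)/2$): by the same monotonicity in $t'$, every round $t' \leq t$ has upper end $2t'-t''-1<1$, so no $r \geq 1$ has ever been processed for $t''$ and only (a) is in force. The case assumption rearranges to $t'' \geq 2t-1$, whence $5(t''-1) \leq 10(t''-t+1)$ by a one-line calculation, so (a) already yields $|N_{t''}(v)| \leq 2^{10(t''-t+1)}+K$.

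The only real pitfall is the bookkeeping in Case 2: one must notice that when Step 2's range is empty/out of bounds in round $t$, the monotonicity of the upper end $2t'-t''-1$ implies the range was empty in every earlier round too, so there is genuinely nothing but the initialization bound to lean on—and the case hypothesis is just strong enough to make that bound fit the claimed $2^{10(t''-t+1)}+K$. The rest is arithmetic.
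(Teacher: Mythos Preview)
Your proof is correct and follows essentially the same route as the paper: in Case~1 you use exactly the paper's observation that the last $r$ processed in Step~2 of round $t$ is $r=2t-t''-1$, and the non-sleep outcome at that $r$ yields $|N_{t''}(v)|\le 2^{5(t''-r)}+K=2^{10(t''-t)+5}+K\le 2^{10(t''-t+1)}+K$. Your Case~2 (when $2t-t''-1<1$, so no $r\ge 1$ is processed in any round up to $t$) is an edge case the paper's short proof does not spell out; you handle it cleanly via the initialization bound $|N_{t''}(v)|\le 2^{5(t''-1)}+K$ together with $t''\ge 2t-1\Rightarrow 5(t''-1)\le 10(t''-t+1)$, which is a nice addition rather than a different idea.
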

\begin{proof}
In round $t$, node $v$ examines the messages sent by nodes in $N_{t''}(v)$ during rounds $[t-2(t''-t), t-(t''-t)-1]$, round by round, and refines $N_{t''}(v)$ accordingly. In particular, if at the end, in round $r=t-(t''-t)-1$, we have $|N_{t''}(v)| > 2^{5(t''-r)} + K = 2^{10(t''-t+1)} + K$, then node $v$ is declared sleeping for rounds $[t'', t''+(t''-r)]$.
\end{proof}

\begin{claim}\label[claim]{clm:messagesRead}
In any round $t\in [T]$, for any past round $t' \leq t-1$, any node $v$ reads the messages of at most $2^{25(t-t'+1)}+K(t-t')$ neighbors sent during round $t'$.
\end{claim}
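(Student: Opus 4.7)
The plan is to account separately for every occasion in round $t$ on which $v$ reads a message sent in round $t'$. Inspecting the four steps of the algorithm, the only such reads occur in Step 1 (the wake-up read at the end of a sleep interval) and in Step 2 (the lookups toward a future round $t''$); Step 3 reads nothing and Step 4 reads only round-$t$ messages. Hence the total number of round-$t'$ messages read in round $t$ is at most
\[ |N_{t'}(v)| \;+\; \sum_{t''} |N_{t''}(v)|, \]
where the first term appears only if $v$ is waking up from a sleep interval containing $t'$, and the sum ranges over those future $t''$ for which $t'\in[t-2(t''-t),\,t-(t''-t)-1]$ and $v$ is not declared sleeping for round $t''$. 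Each size is evaluated at the actual moment of reading in round $t$.

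For the Step 2 sum: the eligible $t''$ form the interval $[t+\lceil(t-t')/2\rceil,\,2t-t'-1]$, of cardinality at most $\lceil(t-t')/2\rceil$. At the moment of reading round $r=t'$ inside the inner loop for a given $t''$, the size is bounded by $\max\!\left(2^{10(t''-t+2)},\,2^{5(t''-t'+1)}\right)+K$: the first bound applies when $t'=t-2(t''-t)$ is the first inner iteration, in which case $|N_{t''}(v)|$ equals its end-of-round-$(t-1)$ value and is bounded via \Cref{clm:NeighborsIfNOTASLEEP}; the second follows from the inner-loop stop-condition at round $r-1$ (which, having been passed, leaves $|N_{t''}(v)|\leq 2^{5(t''-r+1)}+K$). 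Both exponents are bounded by $10(t-t')+30$ since $t''\leq 2t-t'-1$ gives $t''-t'+1\leq 2(t-t')$ and $t''-t+2\leq (t-t')/2+3$. Summing over the $O(t-t')$ eligible $t''$ yields a Step 2 contribution of at most $(t-t')\cdot 2^{10(t-t')+30}+K(t-t')/2$.

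For the Step 1 term: $v$ wakes up at $t$ from some maximal sleep interval $[t_s,t-1]$ with $t_s\leq t'$. Because $v$ is declared sleeping for round $t'$, $N_{t'}(v)$ has not been refined by Step 2 after the sleep declaration covering $t'$ was issued, so $|N_{t'}(v)|$ equals its size at the moment of that declaration. In each case (initialization check on $N_{t'}(v)$, initialization check on some earlier $N_{t_s}(v)$, or a Step-2 declaration), the triggering threshold is of the form $2^{5(\cdot)}+K$ and the chosen sleep duration $L$ is tied to that threshold; combining this relation with the wake-up condition $L\leq t-t_s\leq t-t'$ converts the bound into $|N_{t'}(v)|\leq 2^{O(t-t')}+K$. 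Adding this single contribution to the Step 2 sum and absorbing the polynomial-in-$(t-t')$ prefactor into the exponent by the choice of constant $25$ closes out the bound $2^{25(t-t'+1)}+K(t-t')$.

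The main obstacle I anticipate is the Step 1 case, specifically disentangling the interaction between the initialization threshold (which mixes $t'$ and the chosen sleep length $z$) and the wake-up round $t$ when overlapping sleep intervals are present; the key observation is that the sleep interval containing $t'$ and ending at $t-1$ has length at most $t-t'$, which forces the triggering size threshold to depend on $t-t'$ rather than on $t$ alone. Once this bookkeeping is made explicit, the remaining arithmetic is a direct geometric sum over the $O(t-t')$ contributing sources.
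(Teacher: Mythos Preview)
Your decomposition into Step 1 and Step 2 contributions, and your Step 2 analysis (bounding $|N_{t''}(v)|$ via \Cref{clm:NeighborsIfNOTASLEEP} or the immediately preceding inner-loop check, then summing over the $O(t-t')$ eligible values of $t''$), is essentially the same as the paper's Part (B). The arithmetic differs only cosmetically.

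Your Step 1 argument, however, takes a genuinely different route from the paper and has a real gap. You try to bound $|N_{t'}(v)|$ at the moment it was frozen by a sleep declaration, and then tie the triggering threshold to the sleep duration $L\leq t-t'$. This fails in two ways:
\begin{itemize}
\item The declaration that put $v$ to sleep in round $t'$ may have been triggered by a \emph{different} set $N_{t_0}(v)$ with $t_0<t'$ (either at initialization, where the interval $[t_0,t_0+z_0]$ covers $t'$, or in Step~2 processing $t''=t_0$). Such a trigger bounds $|N_{t_0}(v)|$, not $|N_{t'}(v)|$; your case list names this possibility but never explains how it controls $|N_{t'}(v)|$.
\item Even when the trigger \emph{is} the initialization check on $N_{t'}(v)$ itself, the threshold exponent is $5(t'+z-1)$, not $5z$. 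Your observation $z+1\le t-t'$ therefore only yields $|N_{t'}(v)|\le 2^{5(t-1)}+K$, which depends on $t$, not on $t-t'$; it is \emph{not} $2^{O(t-t')}+K$ when $t'$ is close to $t$. Your ``key observation'' that the sleep length is at most $t-t'$ does not convert this into the needed form.
\end{itemize}

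The paper's Part (A) avoids both issues by a different mechanism: it never asks which declaration froze $N_{t'}(v)$. Instead it fixes the strategic earlier round $q=t'-(t-t')$ and argues that if $|N_{t'}(v)|$ were larger than $2^{10(t'-q+1)}+K$ at the end of round $q$, then the Step~2 refinement of $N_{t'}$ in round $q$ (last inner iteration $r=q-(t'-q)-1$) would declare $v$ sleeping for $[t',\,t'+2(t'-q)+1]$, an interval that \emph{contains $t$} by the choice of $q$ --- contradicting that $v$ is awake in round $t$. This gives the bound directly in terms of $t-t'$. The case $q\le 0$ (equivalently $t'\le t/2$) is handled separately via the initialization threshold, where $2^{5(t-1)}+K$ is indeed at most $2^{10(t-t')}+K$ because $t-t'\ge t/2$. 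Your proposal would be repaired by replacing your freeze-time argument with this choice-of-$q$ argument.
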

\begin{proof}
In round $t$, node $v$ needs to read the message sent by a neighbor $u$ in round $t' \leq t-1$ for two purposes: (A) to perform step 1, which requires knowing whether $u$ joined the independent set $I$ in round $t'$ or not for $u\in N_{t'}(v)$, and (B) to perform step 2, which requires, for every future round $t''\geq t$, knowing the marking probability of node $u\in N_{t''}(v)$ in each round $r\in [t-2(t''-t), t-(t''-t)]$. We discuss these two cases separately. 
\begin{itemize}
    \item \textbf{Part (A), messages of step 1.} Suppose that $v$ is not sleeping in round $t$, as otherwise it reads no messages in step 1. Then, for the earlier round $t'\leq t-1$, during step 1 of round $t$, node $v$ will read messages sent in round $t'$ by nodes in $N_{t'}(v)$ to see if any of them joined the independent set $I$ during round $t'$.  Since the set $N_{t'}(v)$ does not grow over time, it suffices that node $v$ reads the message of each neighbor $u$ sent during round $t'$ only if $u\in N_{t'}(v)$ in the earlier round $q = t'- (t-t')$. Notice that $t'-q=t-t'$. We have two cases: (I) Suppose $q\geq 1$, which means $t'>t/2$. Since node $v$ was not declared sleeping for round $t$ by the end of round $q$, by \Cref{clm:NeighborsIfNOTASLEEP}, we get that in the of round $q$ we must have had $|N_{t'}(v)| \leq 2^{10(t'-q+1)} + K = 2^{10(t-t'+1)} + K$. (II) Suppose that $q\leq 0$, which means $t'\leq t/2$. Notice that in this case talking about an event that happened in round $q$ is meaningless as $q\leq 0$. But here we can rely on the initialization performed at the beginning of round $1$. In particular, if at that point we had $|N_{t'}(v)| > 2^{5(t-1)} + K$, then node $v$ would have been declared sleeping for rounds $[t', t]$ which includes round $t$. Since $v$ is not sleeping in round $t$, we must have had $|N_{t'}(v)| \leq 2^{5(t-1)} + K$. We have $t'\leq t/2$ which means $(t-t')\geq t/2.$ Hence, we can conclude that $|N_{t'}(v)| \leq 2^{10(t-t'+1)} + K.$

    \item \textbf{Part (B), messages of step 2.} During step 2 of round $t$, mode $v$ might read the messages sent in a past round $t'\leq t-1$, by neighbors in $N_{t''}(v)$ for different future rounds $t''\geq t$. First, let us discuss a fixed future round $t''\geq t$. Suppose that $t''$ was not declared a sleeping round for $v$ in the rounds before $t$, as otherwise we do not process any messages from $N_{t''}(v)$ in step 2 of round $t$. Let $q=t-(t''-t)$. If $q\leq 1$, then there are no messages from rounds $[t-2(t''-t), t-(t''-t)-1]$ to be checked in step 2 of round $t$ for future round $t''$, because $t-(t''-t)-1=q-1\leq 0$ and there is no such round. Suppose $q\geq 1$. Node $v$ was not declared sleeping in round $t''$ when we refined $N_{t''}(v)$ during step 2 of the earlier round $q=t-(t''-t)$. Hence, by \Cref{clm:NeighborsIfNOTASLEEP}, in the end of round $q$, we must have had $|N_{t''}(v)|\leq 2^{10(t''-q+1)}+K \leq 2^{20(t''-t+1)}+K$. Thus, also in the beginning of round $t$ (even before performing any of its steps), we must have $|N_{t''}(v)|\leq 2^{10(t''-q+1)}+K \leq 2^{20(t''-t+1)}+K$.
    
    In round $t$, we read messages from neighbors in $N_{t''}(v)$ sent during rounds $[t-2(t''-t), t-(t''-t)-1]$, and we gradually filter $N_{t''}(v)$ accordingly. In this lemma, we are interested in how many messages sent in the particular round $t'$ are read in round $t$. So, support $t'$ is a round in this interval $[t-2(t''-t), t-(t''-t)-1]$ for which we read during round $t$ the messages that were sent by nodes in $N_{t''}(v)$ during round $t'$. Since we have upper bounded  $|N_{t''}(v)|\leq 2^{10(t''-q+1)}+K \leq 2^{20(t''-t+1)}+K$ at the beginning of round $t$, the number of messages read during round $t$ sent during earlier round $t'$ by neighbors in $N_{t''}(v)$ is at most $2^{20(t''-t+1)}+K.$ 
    
    Now, we can sum the above over different choices of $t''$. Notice that in the above, $t''$ was a round in the interval $[\frac{(t-t')}{2}+t, (t-t') + t]$. Hence, the total number of messages that node $v$ will read during round $t$ sent at the earlier round $t' \leq t-1$, summed up over all the relevant choices of future rounds $t''\geq t$, is at most $$\sum_{t''=\frac{(t-t')}{2}+t}^{(t-t')+t} (2^{20(t''-t+1)} + K) \leq  2 \cdot 2^{20(t-t'+1)} + K\cdot (t-t')\leq 2^{25(t-t'+1)}+K(t-t').\qedhere$$
\end{itemize}
\end{proof}

\subsection{LCA Simulation}
\label{subsec:LCA}
We devise a (recursive) LCA that allows us to simulate the above LOCAL procedure, so that we can determine each node's status at the end of the algorithm using $\poly(\Delta)$ queries. We first provide an intuitive discussion and then present the formal result.

\paragraph{What needs to be done in a simulation} For any $t\in [1, T$], let us use $Q(t)$ to denote the subroutine (or oracle) that receives a node $v$ as an input and simulates its behavior for rounds $[1, t]$. Slightly abusing notation, we also use $|Q(t)|$ to refer to (the worst case over all vertices $v$ of) the number of queries during the run of $Q(t)$ called on node $v$. When $Q(t)$ is called upon node $v$, we need to perform three things: (1) Simulate node $v$ in rounds $[1, t-1]$, which can be done by calling $Q(t-1)$ on $v$. (2) For each round $r\in [1, t-1]$, simulate neighbors $N_{r}(v)$ and see if any of them joins the independent set $I$ in round $r$, which would mean we declare node $v$ dead by round $t$ (note that it is possible that a neighbor has joined the independent set $I$ in a much earlier round $r\ll t$ and node $v$ reads that message only in round $t$, because $v$ was sleeping before that). (3) We also need to simulate some of the neighbors of $v$ for rounds $[1,t-1]$; these are neighbors that may influence node $v$ in round $t$. For the intuitive discussion, let us focus on only this third part. This part itself can turn out to be too expensive in terms of query complexity, if not done right. 

\paragraph{Intuitive discussion of the naive simulation} For this third part, the question is which neighbors should be simulated for rounds $[1, t-1]$ and how many such neighbors are there? Let us simplify the problem for now and ignore that in round $t$ we also do some refinement work for future rounds $t''$. So, for now, we are just interested in the marking process and the independent set that is created up to (and including) round $t$, ignoring the preparation work for the future rounds.

Consider the procedure $Q(t)$. Before the beginning of round $1$, we have read the randomness of all the up to $\Delta$ neighbors of node $v$ and have formed the initial set $N_{t}(v)$ of neighbors that can influence node $v$ in round $t$. At that point, all that we know is that this set has size at most $2^{5t}+K$, as otherwise we would have put $v$ to sleep in round $t$ and thus it would not read the messages of its neighbors for the sake of the marking process of round $t$. Naively, we would think of calling $Q(t-1)$ on each of those neighbors. However, this would lead to the recursive complexity $|Q(t)| = \Delta + (2^{5t} + K) \cdot |Q(t-1)|> 2^{5t} \cdot |Q(t-1)|$. The solution to this recursion is $|Q(T)| \geq  2^{\Theta(T^2)}$. That is, to simulate the $T=\Theta(\log \Delta)$ round algorithm, this would have query complexity $2^{\Theta(\log^2 \Delta)} = \Delta^{\Theta(\log \Delta)}$, which is well beyond our target complexity of $\poly(\Delta)$, and essentially no better than a naive simulation of any $\Theta(\log \Delta)$ round algorithm. 

\paragraph{Remedy (intuitive but imprecise explanation)} Let us instead use the guarantees of the LOCAL algorithm, and in particular the observations described in \Cref{subsec:observations}. These allow us to gradually filter out $N_{t}(v)$. Suppose we simulate this initial set of $2^{5t} + K$ neighbors in $N_{t}(v)$ for $1$ round. At this point, the set $N_{t}(v)$ of relevant neighbors should shrink to $2^{5(t-1)} + K$ or otherwise we would declare node $v$ sleeping in round $t$ and thus would not need to simulate it. More generally, if we have simulated the previous set $N_{t}(v)$ of relevant neighbors for $t'$ many rounds, the remaining set size should drop to at most $2^{5(t-t')} + K$, as otherwise we would declare node $v$ sleeping in round $t$. Using this, we get the following recursion

$$|Q(t)| = \Delta + \bigg[(2^{5t} + K) \cdot |Q(1)| + (2^{5(t-1)} + K) \cdot |Q(2)| + \dots + (2^{10} + K) \cdot |Q(t-1)|\bigg].$$
Recall that $K$ is a large constant. The solution is $|Q(T)| \leq \Delta \cdot exp(\Theta( T))$, as can be proved by induction. Once we set  $T=\Theta(\log \Delta)$, this gives the overall query complexity of $\poly(\Delta)$ for (this partial) simulation the $T$-round LOCAL algorithm on any single node $v$. This discussion ignores some of the other computations that have to be performed for each node in a round, but provides the right intuition for the overall complexity. We next present the formal statement and analysis.

\begin{lemma}\label[lemma]{lem:LCA}
There is a local computation algorithm that simulates the above LOCAL algorithm. When questioned about any single node $v$, it performs $\poly(\Delta)$ queries to the graph, and $\poly(\Delta)$ computations, and answers whether $v \in I$ or $v\in \Gamma(I)$ or $v\notin (I \cup \Gamma(I))$. Here, $I$ is the independent set computed in the course of the LOCAL algorithm.
\end{lemma}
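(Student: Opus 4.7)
The plan is to define, for each $t \in [1, T]$, a recursive oracle $Q(t)$ which, when called on a node $v$, returns all of $v$'s state from the \local algorithm through the end of round $t$: namely $v$'s sleep schedule, the marking probabilities $p_s(v)$ for $s \leq t$, the currently filtered sets $N_{t''}(v)$ for $t'' > t$, and whether $v$ is dead, marked, or has joined $I$ in each round $s \leq t$. Write $|Q(t)|$ for the worst-case number of graph queries used by $Q(t)$ on any input node. The LCA, upon being questioned about $v$, invokes $Q(T)(v)$ (to decide whether $v \in I$) and, if $v \notin I$, also $Q(T)(u)$ for each $u \in \Gamma(v)$ (to decide $v \in \Gamma(I)$ versus $v \notin I \cup \Gamma(I)$); the latter step multiplies the overall query count by only a further $\Delta$ factor.

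First I would describe the base case $Q(1)$: on input $v$, query the adjacency list of $v$ to obtain $\Gamma(v)$, consult the shared random tape to read $\rho_s(u)$ for every $u \in \Gamma(v)$ and $s \in [1, T]$, and build the initial sets $N_s(v)$ together with the initial sleep schedule exactly as in the \local initialization. This uses at most $\Delta$ graph queries. For the recursive step, executing $Q(t)(v)$ begins with a call to $Q(t-1)(v)$, which supplies $v$'s state through round $t-1$ including the sets $N_{t''}(v)$ filtered up to that point. It then simulates round $t$ of $v$ by running Steps 1--4 of the main body. Every piece of neighbor information required is obtained by recursion: Step 1 requires, for each $r$ in which $v$ had been sleeping and now wakes up, $Q(r)(u)$ for every $u \in N_r(v)$ (to learn whether $u \in I$ at round $r$); Step 2 requires, for each not-yet-sleeping future round $t'' \geq t$ and each $r \in [t-2(t''-t),\, t-(t''-t)-1]$, $Q(r)(u)$ for every $u \in N_{t''}(v)$ (to learn $p_r(u)$ and $u$'s dead/alive status). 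Results are memoized by the pair $(u,r)$ so no recursive call is executed more than once within a top-level invocation.

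Next I set up and solve the recurrence. By \Cref{clm:messagesRead}, the total number of round-$r$ messages that $v$ reads during round $t$ is at most $2^{25(t-r+1)} + K(t-r) \leq 2 \cdot 2^{25(t-r+1)}$, since $K$ is a constant and $t-r \geq 1$. Each such message triggers one recursive call $Q(r)$ of cost $|Q(r)|$, giving
\[
|Q(t)| \;\leq\; |Q(t-1)| \;+\; 2 \sum_{r=1}^{t-1} 2^{25(t-r+1)} \cdot |Q(r)|.
\]
I would prove by induction on $t$ that $|Q(t)| \leq \Delta \cdot B^{t}$ for a sufficiently large absolute constant $B$ (e.g.\ $B = 2^{52}$). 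The base $|Q(1)| \leq \Delta \leq \Delta B$ is immediate. For the inductive step, the sum rewrites as $2^{26} \Delta B^{t} \sum_{k=1}^{t-1}(2^{25}/B)^{k} \leq \Delta B^{t}/2$ when $B = 2^{52}$, so the right-hand side is at most $\Delta B^{t}(B^{-1} + 1/2) \leq \Delta B^{t}$, closing the induction. Setting $T = \Theta(\log \Delta)$ then yields $|Q(T)| \leq \Delta \cdot 2^{52\, \Theta(\log \Delta)} = \poly(\Delta)$. Local computation per recursive call is constant, so the total number of operations is also $\poly(\Delta)$.

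The main obstacle is obtaining a recurrence that actually closes to $\poly(\Delta)$. A naive bound based only on the set sizes $|N_{t''}(v)| \leq 2^{10(t''-t+1)}+K$ from \Cref{clm:NeighborsIfNOTASLEEP} would have $v$ recurse on up to $2^{\Theta(t)}$ neighbors at round $t-1$, leading to the $2^{\Theta(t^{2})}$ blow-up noted in the outline. It is essential to invoke \Cref{clm:messagesRead} in its per-past-round form: many messages may be read only when they come from sufficiently far back, and then the recursive cost $|Q(r)|$ is geometrically smaller than $|Q(t)|$ by a factor $B^{t-r}$, which beats the $2^{25(t-r+1)}$ growth provided $B > 2^{25}$. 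Once this asymmetry is exploited, the remaining work is bookkeeping: verifying that the four steps of round $t$ really do only need the neighbor data enumerated above, and that memoization keeps the charge per neighbor in line with the per-round message bound of \Cref{clm:messagesRead}.
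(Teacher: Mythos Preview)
Your proposal is correct and follows the same recursive-oracle strategy as the paper; invoking \Cref{clm:messagesRead} directly to bound the number of $Q(r)$ calls per past round $r$ is a clean shortcut compared to the paper's step-by-step re-derivation from \Cref{clm:NeighborsIfNOTASLEEP}, but the resulting recurrence and its inductive solution $|Q(t)|\le \Delta\cdot B^{t}$ are essentially identical.

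One small bookkeeping omission: \Cref{clm:messagesRead} only counts messages \emph{sent in past rounds} $t'\le t-1$, i.e., those read in Steps~1 and~2. Step~4 additionally requires $v$ to learn the round-$t$ marking status of each neighbor in $N_t(v)$; in the LCA this means calling $Q(t-1)$ on each such neighbor to obtain $p_t(u)$. Since $v$ was not declared sleeping for round $t$ by the end of round $t-1$, \Cref{clm:NeighborsIfNOTASLEEP} gives $|N_t(v)|\le 2^{10}+K=O(1)$, so this contributes only an extra $O(1)\cdot|Q(t-1)|$ term to your recurrence and does not disturb the induction---but it should be stated, as it is not covered by the sum over $r\le t-1$ you wrote.
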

\begin{proof}[Proof of \Cref{lem:LCA}] 
For any $t\in [1, T$], let us use $Q(t)$ to denote the LCA subroutine that receives any node $v$ as an input and simulates its behavior for rounds $[1, t]$. Slightly abusing notation, we also use $|Q(t)|$ to refer to the number of queries during $Q(t)$ (worst case over all vertices $v$).

Furthermore, again with some abuse of notation, let us use $Q(0)$ to indicate the subroutine that includes all the initialization steps that a node performs before even the first round: setting its $p_{1}(v)$, forming the initial neighbor sets $N_{t}(v)$ for all $t\in [T]$ by checking the neighbors $u$ (concretely their random values $\rho_{t}(u)$) and declaring $v$ sleeping in rounds $[t, z]$ if $N_{t}(v) > 2^{5(t+z-1)}+K$. Notice that all of these can be performed using only $\Delta$ queries to the graph (to determine the neighbors). Hence, we will use $|Q(0)| = \Delta$ as our base case.

When $Q(t)$ is called upon node $v$, we first call $Q(t-1)$ on node $v$. This takes care of simulating node $v$ for rounds $1$ to $t-1$. What remains is to simulate node $v$ for round $t$, for all the four steps of the algorithm in that round. Of course, this will incur simulating some of the neighbors that may influence node $v$ in round $t$ for some of the previous rounds. We discuss this next, by examining each of the steps of the LOCAL algorithm separately.

\paragraph{Simulating Step 1} Let us first discuss step 1 of round $t$. If node $v$ is sleeping in round $t$ (indicated because of the simulation of $v$ for rounds $1$ to $t-1$), nothing needs to be done. Suppose this is not the case. Then, for each earlier round $r\in [1, t-1]$, we need to process $N_{r}(v)$ and determine whether any of those nodes joined the independent set $I$, which would thus make us declare node $v$ dead (i.e., it cannot join $I$). We divide this range into two parts of $[1, \lfloor t/2\rfloor]$ and $[\lfloor t/2\rfloor+1, t-1]$. 
\begin{itemize}
    \item \textbf{The First Half.} For each  $r\in [1, \lfloor t/2\rfloor]$, we do as follows: from the simulation of rounds up to $t-1$, which includes $Q(0)$, we know that $v$ has identified the initial set of $N_{r}(v)$ in round $1$. Notice that if $N_{r}(v)> 2^{5(t-1)}+K$, then node $v$ would have been declared sleeping for rounds $[r, t]$. Hence, we conclude that we must have $N_{r}(v) \leq 2^{5(t-1)} +K$. Then, we simply call $Q(r)$ on all of these at most $2^{5(t-1)} +K$ neighbors that are in $N_{r}(v)$ in round $1$. This costs $(2^{5(t-1)} + K)\cdot |Q(r)|$ queries for each round $r\in [1, \lfloor t/2\rfloor]$. 
    
    \item \textbf{The Second Half.} For each $r\in [\lfloor t/2\rfloor+1, t-1]$, we do as follows: from the simulation of rounds up to $t-1$, which includes $Q(0)$, we know that $v$ has identified the initial set of $N_{r}(v)$ in round $1$. Then, for each round $s$ from $1$ to $r-(t-r)$, we call $Q(s)$ on the set $N_{r}(v)$ of neighbors remaining after we have processed the neighbors in $N_{r}(v)$ up to round $s$. Notice that if $|N_{r}(v)| > 2^{5(r-s)} + K$, we would declare node $v$ sleeping in round $[r, r+(r-s)]$, which includes round $t$. The latter is because $s\leq r-(t-r)=2r-t$, which implies $r+(r-s) \geq 2r-(2r-t) = t$. Hence, we must have had $|N_{r}(v)| \leq 2^{5(r-s)} + K$ after having processed messages of rounds up to $s$. Hence, in total, for each round $r\in [\lfloor t/2\rfloor+1, t-1]$, the query complexity is at most $\sum_{s=1}^{r-(t-r)} (2^{5(r-s+1)} + K)\cdot Q(s).$  
    \end{itemize}
Summarized over all rounds of both halves, the query complexity to perform step 1 is at most

\begin{align} 
& &\bigg(\sum_{r=1}^{\lfloor t/2\rfloor} (2^{5(t-1)} + K) \cdot |Q(r)|\bigg) &+ \bigg(\sum_{r=\lfloor t/2\rfloor}^{t-1} \sum_{s=1}^{r-(t-r)} (2^{5(r-s+1)} + K) \cdot |Q(s)|\bigg) \nonumber \\
&\leq &\bigg(\sum_{r=1}^{\lfloor t/2\rfloor} (2^{10(t-r)} + K) \cdot |Q(r)|\bigg) &+ \bigg( \sum_{s=1}^{r-1} 2 (2^{5(t-s)} + K) \cdot |Q(s)|\bigg)\nonumber \\
&\leq &\bigg( \sum_{s=1}^{r-1} 3 (2^{10(t-s)} + K) \cdot |Q(s)|\bigg)\tag{1}
\end{align}

\paragraph{Simulating Step 2} Let us now discuss step 2 of round $t$. Here, the algorithm involves a separate process for each $t''\geq [t, T]$ that has not been declared sleeping for node $v$. Let us focus on one of these rounds. Let $q=t-(t''-t)$. If $q\leq 1$, then there are no messages from rounds $[t-2(t''-t), t-(t''-t)-1]$ to be checked in step 2, as there is no such round. This is also the case if $t''=t$, in which case the interval is also empty. Hence, suppose $q\geq 1$ and $t''\geq t+1$. Notice that at the end of round $q$, when we read messages from rounds up to $r=t-2(t''-t)-1$, node $v$ was not declared sleeping for round $t''$, which means we must have had $N_{t''}(v) \leq 2^{5(t''-r)} + K = 2^{5(3(t''-t)+1)} + K$. Next, for each earlier round $t'$ from $t-2(t''-t)$ to $t-(t''-t)-1$, the algorithm refines $N_{t''}(v)$ sequentially. If after processing messages of round $t'\in [t-2(t''-t), t-(t''-t)-1]$ we have $N_{t''}(v)> 2^{5(t''-t')} + K,$ then we declare node $v$ sleeping in rounds $[t'', t''+ (t''-t')]$ and thus we do not need to continue this process for that future round $t''$ (in this current round $t$). Notice that writing the relation in terms of $t''$, we can say $t''\in [\frac{(t-t')}{2}+t, (t-t') + t]$. Hence, the number of messages to be read after we have processed round $t'$ is at most $N_{t''}(v)\leq 2^{5(t''-t')}$. Because of this, we can upper bound the query complexity of processing the messages related to $N_{t''}(v)$ by 
\begin{align*}
   (2^{5(3(t''-t)+1)} + K)\cdot |Q(t-2(t''-t))| \;\; + \;\; \big(\sum_{t'=t-2(t''-t)+1}^{t-(t''-t)-1}  (2^{5(t''-t')} + K) \cdot |Q(t')|\big). \tag{$*$}
\end{align*}

Notice also that we need to calculate the summation of $(*)$ over different $t''$. Let us examine the summation of the above bound over different $t''$, by analyzing each of the two terms of $(*)$ separately. By setting $s=t-2(t''-t)$, the first term in ($*$) can be simply upper bounded as $(2^{20(t-s)} + K) \cdot Q(s)$. Hence, the summation over different $t''$ can be upper bounded by 
$\sum_{s=1}^{t-1} (2^{20(t-s)} +K) \cdot Q(s).$

For the second term in ($*$), for each particular $t'\leq t-2$, the summation over different $t''$ is upper bounded by
\begin{align*}
   \sum_{t''=\frac{(t-t')}{2}+t }^{(t-t') + t}  (2^{5(t''-t')} + K) \cdot |Q(t')| \leq (2 \cdot (2^{10(t'-t)}) + K (t-t'))\cdot |Q(t')|.
\end{align*}
Therefore, overall, the query complexity of step 2 of round $t$ can be upper bounded by 
\begin{align} 
\sum_{s=1}^{t-1} 2\cdot (2^{20(t-s)} +K(t-s)) \cdot |Q(s)|. \tag{2}
\end{align}

\paragraph{Simulating Step 3} This step happens only when node $v$ has been declared sleeping for round $t$ and in this case node $v$ is does not read any messages of the neighbors in this round, and thus there is no extra query needed.

\paragraph{Simulating Step 4} This step happens only when $v$ is not sleeping in round $t$. In that case, we determine whether $v$ is marked or not, i.e., whether it attempts to join the independent set $I$ in round $t$ or not. The only message that needs to be read from the neighbors is whether any neighbor in $N_{t}(v)$ also got marked or not. Since node $v$ was not declared sleeping for round $t$ in the earlier round $t-1$, we know that at that point we must have had $N_{t}(v)\leq 2^{10}+K$. Hence, the query complexity of this step is simply calling Q($t-1$) on each of these at most $2^{10}+K$ neighbors, which is upper bounded by $(2^{5}+K)\cdot |Q(t-1)|$.

Overall, we can thus upper bound the query complexity of simulating $v$ up to round $t$ as \begin{align*}
|Q(t)| \leq |Q(t-1)| +  \bigg( \sum_{s=1}^{t-1} 10 (2^{20(t-s)} + K (t-s)) \cdot |Q(s)|\bigg)  
\end{align*} 

Recall the base $Q(0)=\Delta$. Using a simple induction, one can show that the solution to the recursion is bounded as $Q(T) \leq \Delta \cdot 2^{CT}$, where $C$ is a sufficiently large constant. Hence, to simulate each node's behavior in the LOCAL algorithm that has round complexity $T=\Theta(\log \Delta)$, the query complexity is $\poly(\Delta)$.
\end{proof}

\subsection{Analysis of the LOCAL Algorithm}
\label{subsec:Analysis}
\paragraph{Intuitive Discussion} The general idea in the analysis is to show that despite the changes in the LOCAL algorithm (which are the whole reason that we can now achieve the $\poly(\Delta)$ query complexity), we can follow an analysis roughly along the lines of that of the original algorithm of \cite{ghaffari2016MIS}. Let us provide an intuitive outline. We comment this intuition is quite imprecise and inaccurate; the precise statements needs many definition and are presented after this intuition. We will show that, unless an unlikely event of probability at most $1/\poly(\Delta)$ happens, there will be a constant fraction of rounds that are good for $v$, in the sense that in each such round, either $v$ joins the MIS or a neighbor of $v$ joins the MIS. In particular, we will show that with probability $1-1/\poly(\Delta)$, all except a negligible minority of the decisions of a node to sleep are correct, meaning that the size of the neighborhood that was going to be marked in round $t$ was too large, and in the normal course of the algorithm the node would have not been able to join the independent set in this round. Having shown that, we use the dynamic of the algorithm to argue that in rounds where many neighbors of $v$ are attempting to join, either we are in a good round and there is a constant probability that at least one of them will succeed to join the independent set, or we are in such a round that the number of those attempting neighbor will shrink considerably. This will allow us to say that there cannot be a large fraction of rounds where too many of the neighbors of $v$ are attempting (and still none of them joins the independent set). Because of this, we will be able to show that unless there are a constant fraction of good rounds where neighbors of $v$ have a constant probability per round of joining $I$, there should be a constant fraction of good rounds such that $v$ has a considerable probability of attempting to join in this round and such that with constant probability no neighbor will block it. 

We next proceed to the analysis. For that, we first need to define a set of terminologies and notations. Afterward, we present a number of key claims about the behavior of the algorithm, and at the end, in \Cref{lem:localAnalysis} and \Cref{lem:shattering}, we show that each connected component of the graph induced by the remaining vertices $V\setminus (I\cup \Gamma(I))$ has size $\poly(\Delta) \log n$, with high probability.  

\subsubsection{Terminology and Round Classification}
Let us define $d_{t}(v) = \sum_{u\in \Gamma(v)} p_{t}(v)$. For each node $v$, we classify the rounds into a few different (not necessarily disjoint) possibilities: 
\begin{itemize}
     \item \textbf{Sleeping Mistake Rounds:} Recall that if in a round $t'$ and for a node $v$, we have $|N_{t}(v)|> 2^{5(t-t')} + K$ for some future round $t\geq t'+1$, then we declare node $v$ sleeping for rounds $[t, t+(t-t')]$. We say this was a mistake if at time $t'$ we had $\sum_{u\in \Gamma(v)} p_{t'}(u) \leq 2^{4(t-t')}.$ And in this case we have created $t-t' + 1$ rounds where node $v$ would be sleeping by mistake. If this mistake happens, we call the time interval $[t'+1, t+t-t']$ a \emph{sleeping mistake period} for node $v$, or simply a mistake period for $v$, and each round in this interval a \emph{mistaken} round for $v$.
     Any round that is not mistaken is called an \emph{intact} round.
     
     Similarly, before the beginning of round $1$, we examine $N_{t}(v)$ and if $|N_{t}(v)|> 2^{5(t-1)} + K$, we will put $v$ to sleep for some rounds. Let $z$ be the greatest integer such that  $|N_{t}(v)|> 2^{5(t+z-1)} + K$. Then, we put $v$ to sleep for rounds $[t, t+z]$. Noting that in round $1$ we must have had $\sum_{u\in \Gamma(v)} p_{1}(u) \leq \Delta \cdot \frac{1}{2\Delta} =\frac{1}{2}$ and as $p_{t}(u) \leq 2^{t-1} p_{1}( u)$, any such event for $z\geq 1$ is a mistake (i.e., significant deviation from the expectation of $|N_{t}(v)|$, and we consider the interval $[t, t+z]$ as a mistake period, and each of its rounds as a mistaken round for $v$.
    \item \textbf{Sleeping Round:} We say node $v$ is sleeping in round $t$ if $v$ has been put to sleep in round $t$ in one of the earlier rounds $t'\leq t-1$ (either correctly or by mistake).
    \item \textbf{Heavy Round:} We say node $v$ is heavy in round $t$ if $v$ is not sleeping in this round and when we reach round $t$, we have $d_t(v) =\sum_{u\in \Gamma(v)} p_{t}(v) \geq C_\delta=\log (1/\delta)$. 
    \item \textbf{Light Round:} We say node $v$ is light in round $t$ if $v$ is not sleeping in this round and when we reach round $t$, we have $d_t(v) =\sum_{u\in \Gamma(v)} p_{t}(v) \leq \delta$. 
    \item \textbf{Moderate Round:} We say node $v$ is moderate in round $t$ if $v$ is not sleeping in this round and when we reach round $t$, we have $d_t(v) \in (\delta, C_\delta).$
    \item \textbf{Wrong Down Move:} If in an intact round $t$, node $v$ is light but sets $p_{t+1}(v)  \gets p_{t}(v)/2$, we call this a \emph{wrong down move}.
    \item \textbf{Wrong Up Move:} If in an intact round $t$ where $d_{t}(v)\geq \delta$ and $t$ is not a good type-$2$ round for $v$, we have $d_{t+1}(v) > 0.7 d_{t}(v)$, we call this a \emph{wrong up move} for $v$.
        \item \textbf{Type-1 Good Round:} We call a round $t$ a good round of type-$1$ if this is a light or moderate round for $v$, but not a sleeping round, and we have $p_{t}(v)=1/2$.
    \item \textbf{Type-2 Good Round:} Let $H_{t}(v)\subseteq \Gamma(v)$ be the neighbors of $v$ that are heavy or sleeping in round $v$. Then, we call round $t$ a good round of type-$2$ for $v$ if $d_{t}(v)\geq \delta$ and we have $\sum_{u\in H_{t}(v)} p_{t}(v) \leq \frac{19}{20} d_t(v)$. Notice that node $v$ might be sleeping in such a round. 
    
\end{itemize}

\subsubsection{Key Lemmas}

\begin{claim}\label[claim]{clm:sleepMistakes} The probability of each round $t$ being a mistaken round for $v$ is at most $\delta^5$. Moreover, with probability $1-1/\poly(\Delta)$, the total number of mistaken rounds (which includes rounds in which node $v$ is sleeping by mistake) is at most $\delta T$ for a small constant $\delta\in (0, 0.01)$. Furthermore, this holds independent of the randomness of nodes $V\setminus \Gamma(v)$.
\end{claim}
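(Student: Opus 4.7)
The plan is to prove the per-round mistake probability via a Chernoff concentration argument on $|N_{t''}(v)|$, and then bootstrap to the aggregate bound via concentration over rounds.

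For any fixed round $\tau$, ``$\tau$ is mistaken for $v$'' decomposes as a union, over pairs $(r,t'')$ with $t''\le \tau\le t''+(t''-r)$ and $r<t''$, of the event that the algorithm's Step~2 check at the end of processing round $r$'s messages (for future round $t''$) declared a sleep while the mistake condition $\sum_{u\in\Gamma(v)}p_{r}(u)\le 2^{4(t''-r)}$ held. Only $O(T^2)$ such pairs exist, so by a union bound it suffices to bound each by $\delta^5/T^2$. For a fixed $(r,t'')$, I would condition on the entire algorithmic state up through the end of round $r-1$, which fixes $p_{r}(u)$ and the alive/dead status of every $u\in\Gamma(v)$. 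Because $t''>r$, the variables $\{\rho_{t''}(u):u\in\Gamma(v)\}$ remain fresh, uniform on $[0,1]$, and mutually independent, so the refined set $N_{t''}(v)$ at this point is a sum of independent Bernoulli indicators whose conditional mean satisfies $\mu\le \sum_{u}p_{r}(u)\cdot 2^{t''-r}\le 2^{5(t''-r)}$ under the mistake condition. The mistake event is precisely the tail $|N_{t''}(v)|>2^{5(t''-r)}+K\ge \mu+K$, and a Chernoff/Bernstein inequality with $K=20\log(1/\delta)$ yields a tail probability of $\delta^{\Omega(1)}/T^2$; summing gives the per-round bound $\delta^5$.

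The independence from the randomness of $V\setminus\Gamma(v)$ comes for free from this conditioning: the Chernoff step invokes only the fresh variables $\{\rho_{t''}(u):u\in\Gamma(v)\}$, and the conditional tail bound holds uniformly over every fixing of the $p$-history, which is the only channel through which outside randomness could enter.

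For the aggregate count, I would exploit that sleep-mistake declarations for distinct future rounds $t''$ depend on disjoint blocks of randomness $\{\rho_{t''}(u):u\in\Gamma(v)\}$, so after conditioning on the $p$-history the declaration indicators are mutually independent across $t''$. Summed over the $O(T^2)$ admissible pairs and weighted by the maximum sleep-interval length $O(T)$, the expected number of mistaken rounds is at most $O(\delta^5 T^3)$, and a Chernoff concentration on this sum gives a failure probability of $\exp(-\Omega(\delta T))$, which is $1/\poly(\Delta)$ for $T=\Theta(\log\Delta)$ and yields the claimed $\delta T$ upper bound.

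The main technical obstacle is the Chernoff step in the regime where $t''-r$ is not small and $\mu$ approaches its worst-case value $2^{5(t''-r)}$: there the additive slack $K$ becomes small relative to $\mu$ and Bernstein's bound weakens. This is handled by observing that $|N_{t''}(v)|\le|\Gamma(v)|\le \Delta$ caps the effective range of $t''-r$ (a sleep declaration cannot fire once $2^{5(t''-r)}>\Delta$), so the bound must only hold over $t''-r=O(\log\Delta)$; the constant $K=20\log(1/\delta)$ is chosen precisely so that Chernoff/Bernstein survives the $T^2$ union-bound overhead throughout this bounded range.
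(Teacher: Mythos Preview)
Your Chernoff step is where the argument breaks. You bound the conditional mean of $|N_{t''}(v)|$ by $\mu \le \sum_u p_r(u)\cdot 2^{t''-r} \le 2^{5(t''-r)}$ under the mistake condition and then want a tail bound for $|N_{t''}(v)| > 2^{5(t''-r)}+K$. But that is only an additive deviation of $K=20\log(1/\delta)=O(1)$ above the worst-case mean, and Bernstein gives at best $\exp\bigl(-K^2/(2\mu+K)\bigr)$, which is essentially $1$ once $\mu$ is even moderately large---already for $t''-r=3$ the mean can be $2^{15}$ and the bound is vacuous. Your proposed fix (capping $t''-r$ via $|N_{t''}(v)|\le\Delta$) does not help: the failure occurs not at the top of the range but everywhere except $t''-r=O(1)$, and no constant $K$ can survive a Bernstein bound against a mean that grows exponentially in $t''-r$.

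What the paper exploits---and what is missing from your outline---is a \emph{multiplicative} gap between the mean and the threshold. In the paper's proof the mistake condition is taken as $\sum_u p_{t'}(u)\le 2^{3(t-t')}$ (the exponent $4$ in the stated definition appears to be a typo), so that $\mu \le 2^{4(t-t')}$ while the sleep threshold is $2^{5(t-t')}+K$, a factor $\ge 2^{t-t'}$ above the mean. Multiplicative Chernoff then yields tail probability $\exp\bigl(-\Omega(2^{5(t-t')}+K)\bigr) \le \delta^5\cdot 2^{-\Omega(t-t')}$. This exponential decay in $t-t'$ is exactly what makes the union bound over all earlier $t'$ collapse to a geometric series summing to $\delta^5$, with no $T^2$ overhead to absorb into $K$.

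Your aggregate argument inherits the same gap, and its bookkeeping is also off: an expected count of $O(\delta^5 T^3)$ mistaken rounds can exceed the target $\delta T$ once $T=\Theta(\log\Delta)$ is large, so concentration alone cannot deliver the bound. The paper instead groups the mistaken rounds by which neighbor set $N_{t'}(v)$ triggered the sleep, letting $R^{t'}(v)$ be the number of rounds this causes. Each $R^{t'}(v)$ depends only on the block $\{\rho_{t'}(u):u\in\Gamma(v)\}$, so these variables are mutually independent across $t'$; thanks to the exponential-in-$(t-t')$ decay above, each $R^{t'}(v)$ is stochastically dominated by a geometric variable with mean $O(\delta)$; and the sum $\sum_{t'=1}^{T} R^{t'}(v)$ then concentrates below $\delta T$ with failure probability $\exp(-\Omega(\delta T))=1/\poly(\Delta)$.
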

\begin{proof} We first discuss the probability that each round $t$ is a sleeping mistake and upper bound this by $\delta^5$. Then, we argue that the total number of mistakes is at most $\delta T$, with probability $1-1/\poly(\Delta)$. We note that the latter does not follow immediately from the former as there are some dependencies.

\paragraph{Sleeping Mistake Probability Per Round} We first consider sleeping decisions made before the beginning of round $1$. Choose a round $t'\leq t$ and let us first consider the creation of $N_{t'}(v)$ before the beginning of round $1$, and the sleeping rounds it may cause, if $N_{t'}(v)$ is too large. Recall for that for each neighbor $u \in \Gamma(v)$, we have $p_{1}(u) \leq \frac{1}{2\Delta}$. Moreover, we include in $N_{t'}(v)$ neighbor $u$ if and only if we have $\rho_{t'}(u) \leq 2^{t'-1} p_{1}(u)$. Hence, $\mathbb{E}[|N_{t'}(v)|]\leq 2^{t'-2}$. We put $v$ to sleep for round $t'$, and some rounds after, only if we have $|N_{t'}(v)|> 2^{5(t'-1)} + K$, where $K=20\log(1/\delta)$. If $z$ is the greatest integer such that $|N_{t'}(v)|> 2^{5(t'+z-1)} + K$, then we put $v$ to sleep for rounds $[t', t'+z]$. Moreover, whether a neighbor $u$ is in $N_{t'}(v)$ or not simply depends on its own randomness $\rho_{t'}(u)$ and is independent of all other neighbors. Hence, by Chernoff bound, the probability of this event is at most $2\cdot exp(-(2^{5(t'+z-1)} + 20\log(1/\delta))/3) \leq \delta^5 exp(-\Theta(2^{5(t'+z-1)})) \leq \delta^{5} exp(-(2^{4(t+z-1)}))$. This means the probability that we put $v$ to sleep for round $t$ because of a particular prior round $t'\leq t$ is at most $\delta^5 exp(-(2^{4(t-1)}))$. Hence, by a union bound over all $t'\in[1, t]$, we get that the probability that we put $v$ to sleep for round $t'$ is at most $\delta^5 \cdot t\cdot exp(-(2^{4(t-1)})) \leq \delta^{5}$.

Next, let us consider sleeping decisions made in the course of the algorithm, in some round $t'\geq 1$. Consider a future round $t\geq t'+1$. We declare $v$ sleeping for rounds $[t, t+(t-t')]$ if in round $t'$ we had $|N_{t}|> 2^{5(t-t')}+K$. Notice that in this round $\mathbb{E}[|N_{t}|]  \leq \sum_{u\in \Gamma(v)} p_{t'}(v) \cdot 2^{(t-t')}$. Moreover, whether a neighbor $u$ is in $N_{t}(v)$ or not simply depends on its own randomness $\rho_{t}(u)$ and is independent of all other neighbors. Hence, by Chernoff, the probability that we actually had $\sum_{u\in \Gamma(v)} p_{t'}(v) \leq 2^{3(t-t')}$ and made this mistake of putting $v$ into sleep for rounds $[t, t+(t-t')]$ is at most $2\cdot exp(-(2^{5(t-t')}+K)/3) \leq \delta^{5} 2^{-(4(t-t'))}$, given that $K=20\log 1/\delta.$ Hence, by a union bound over all $t'\in [1, t-1]$, we get that the probability that round $t$ was a mistaken round for node $v$ is at most $\sum_{t'=1}^{t-1} \delta^{5} 2^{-(4(t-t'))} \leq \delta^5$.

\paragraph{The Total Number of Sleeping Mistakes}
Let us consider a round $t'$ and the set $N_{t'}(v)$ that we build for it and gradually refine it. Notice that, as discussed above, if $N_{t'}(v)$ crosses a certain threshold, then we put node $v$ into sleep for round $t'$ as well as a number of rounds after it. Notice also that this can happen only once; that is, only one time because of $N_{t'}(v)$ we put node $v$ into sleep for some rounds that includes round $t'$, and from that point on we do not refine $N_{t'}(v)$.  Let $R^{t'}(v)$ be the number of rounds that we put $v$ into sleep because of $|N_{t'}(v)|$ crossing a threshold in any previous round $t''\leq t'$. As discussed above, before the beginning of round $1$, we put $v$ into sleep for $z+1$ rounds --- that is, the period $[t', t'+z]$--- with probability at most $\delta^{5} exp(-(2^{4(t+z-1)}))$. If that happened, $v$ is sleeping in round $t'$ and we do not create any additional sleeping because of $|N_{t'}(v)|$. Otherwise, in each later round $r\in[1, t']$, we put $v$ into sleep for $t'-r+1$ rounds ---that is, the period $[t', t'+(t'-r)]$---with probability at most $\delta^{5} 2^{-(4(t'-r))}$. 

We can conclude that $\mathbb{E}[R^{t'}(v)] \leq \delta/2$.  Therefore, $\mathbb{E}[\sum_{t'=1}^{T} R^{t'}(v)] \leq \delta T/2$. 
Furthermore, the distribution of the random variable $R^{t'}(v)$ is stochastically dominated by a geometric distribution. More precisely, for any $k\geq 1$, we have $\Pr[R^{t'}(v)=k] \leq \frac{\delta}{2} \cdot 2^{-(2k)}$. Since the random variables $R^{t'}(v)$ are all independent for different values of $t'$, and as each is from a distribution stochastically dominating by a geometric distribution, using standard concentration arguments~\cite[Theorem 1.10.32]{doerr2018probabilistic}, we can conclude that $\sum_{t'=1}^{T} R^{t'}(v)$ is concentrated around its expectation and we thus have $\Pr[\sum_{t'=1}^{T} R^{t'}(v) \geq \delta T] \leq exp(-\Theta(\delta T)) = 1/\poly(\Delta)$.

Finally, notice that in this argument, we only examined the random variables that impact whether a neighbor $u\in \Gamma(v)$ is in $N_{t}(v)$ or not. Hence, the guarantee holds independent of the randomness of nodes $V\setminus \Gamma(v)$.
\end{proof}

We will sometimes make use of the following elementary helper lemma (very much like a union bound) to say that conditioning on an event that happens with probability almost $1$ does not change the probabilities too much. In particular, the latter for us will be the event that a particular round is intact, which we know happens with probability at least $1-\delta^5$, from \Cref{clm:sleepMistakes}.
\begin{lemma}\label[lemma]{lem:conditioning}
Consider an event $\mathcal{E}'$ which happens in round $t$ with probability $\Pr[\mathcal{E'}]$. Also, let $\mathcal{E}_{I}$ be another event, in our particular case, the event that round $t$ is intact for node $v$. Then, we have $\Pr[\mathcal{E}'|\mathcal{E}_{I}] \geq \Pr[\mathcal{E}'] - (1-\Pr[\mathcal{E}_{I}]).$
\end{lemma}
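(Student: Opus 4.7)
The plan is to derive this as a two-line manipulation of the definition of conditional probability combined with a standard union-bound style estimate. Because the lemma is purely about probabilities (no structural properties of the algorithm are used), I would not invoke any machinery from earlier in the paper; the argument is identical to a generic ``conditioning on a high-probability event barely shifts probabilities'' statement.

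Concretely, I would first write out the definition $\Pr[\mathcal{E}'\mid\mathcal{E}_I]=\Pr[\mathcal{E}'\cap\mathcal{E}_I]/\Pr[\mathcal{E}_I]$, and then lower bound the denominator by $1$ (which is valid since we want a lower bound on the quotient and the numerator is nonnegative) to get $\Pr[\mathcal{E}'\mid\mathcal{E}_I]\geq \Pr[\mathcal{E}'\cap\mathcal{E}_I]$. The second step is the elementary inequality $\Pr[\mathcal{E}'\cap\mathcal{E}_I]\geq \Pr[\mathcal{E}']-\Pr[\overline{\mathcal{E}_I}]$, which follows from writing $\Pr[\mathcal{E}']=\Pr[\mathcal{E}'\cap\mathcal{E}_I]+\Pr[\mathcal{E}'\cap\overline{\mathcal{E}_I}]$ and using $\Pr[\mathcal{E}'\cap\overline{\mathcal{E}_I}]\leq \Pr[\overline{\mathcal{E}_I}]=1-\Pr[\mathcal{E}_I]$. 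Chaining the two estimates gives exactly the claimed bound.

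There is essentially no obstacle; the only ``care'' point is the trivial edge case $\Pr[\mathcal{E}_I]=0$, in which the conditional probability is undefined, but in the paper's usage $\Pr[\mathcal{E}_I]\geq 1-\delta^5>0$ by Claim~\ref{clm:sleepMistakes}, so this case never arises. I expect the write-up to be two or three lines of display math preceded by a one-sentence appeal to the definition of conditional probability.
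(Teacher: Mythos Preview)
Your proposal is correct and matches the paper's own proof essentially line for line: the paper writes the single chain $\Pr[\mathcal{E}'|\mathcal{E}_{I}] \geq \Pr[\mathcal{E}' \cap \mathcal{E}_I] \geq \Pr[\mathcal{E}']- \Pr[\mathcal{E}' \cap \overline{\mathcal{E}_I}] \geq \Pr[\mathcal{E}'] - (1- \Pr[\mathcal{E}_{I}])$, which is exactly your two steps combined. Your remark on the degenerate case $\Pr[\mathcal{E}_I]=0$ is a reasonable addition that the paper omits.
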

\begin{proof}
We have $\Pr[\mathcal{E}'|\mathcal{E}_{I}] \geq \Pr[\mathcal{E}' \cap \mathcal{E}_I] \geq \Pr[\mathcal{E}']- \Pr[\mathcal{E}' \cap \bar{\mathcal{E}_I}] \geq \Pr[\mathcal{E}'] - (1- \Pr[\mathcal{E}_{I}]) $.
\end{proof}

\begin{claim}\label[claim]{clm:wrongDown}
In each intact round $t$ where $v$ is light, with probability at least $1-2\delta$, we have $p_{t+1}(v) \gets \min\{2 p_{t}(v), 1/2\}$. In other words, the probability of each round being a wrong down move is at most $2\delta$. This statement holds independent of the randomness of nodes $V\setminus \Gamma(v)$.
\end{claim}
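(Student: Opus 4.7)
Since $v$ being light in round $t$ implies (by the definition of ``light'') that $v$ is not sleeping, node $v$ executes Step~4 of the algorithm in round $t$. Reading that step, the down-move $p_{t+1}(v)\gets p_t(v)/2$ is performed iff some neighbor $u\in N_t(v)$ is marked in round $t$, i.e., iff $\rho_t(u)\le p_t(u)$ for some $u\in N_t(v)$. So the ``wrong down move'' event is contained in $\{\exists u\in\Gamma(v):\rho_t(u)\le p_t(u)\}$, and this is what I will upper bound.

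\textbf{Bounding the marking event.} The calculation is a union bound, and the reason it goes through cleanly is that $p_t(u)$ is a deterministic function of the Step~3/Step~4 outcomes of rounds $1,\dots,t-1$, hence measurable with respect to $\sigma(\rho_s(\cdot):s<t)$ and in particular independent of $\rho_t(u)$. Conditioning on this $\sigma$-algebra, the numbers $\{p_t(u)\}_{u\in\Gamma(v)}$ are constants while $\{\rho_t(u)\}_{u\in\Gamma(v)}$ remain independent uniforms on $[0,1]$, so
\[
\Pr\!\left[\exists u\in\Gamma(v):\rho_t(u)\le p_t(u) \,\middle|\, \sigma(\rho_s,s<t)\right] \;\le\; \sum_{u\in\Gamma(v)} p_t(u) \;=\; d_t(v).
\]
On any history on which $v$ is light in round $t$ (a property that constrains $d_t(v)\le\delta$), the right-hand side is at most $\delta$, so the probability of a wrong down move conditional on $v$ being light is at most $\delta$.

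\textbf{Passing to intact rounds.} To convert this into a bound conditional on round $t$ also being intact, I will invoke \Cref{lem:conditioning} with $\mathcal{E}_I=\{t\text{ is intact for }v\}$. By \Cref{clm:sleepMistakes} we have $\Pr[\mathcal{E}_I]\ge 1-\delta^5$, hence
\[
\Pr[\text{no wrong down}\mid\mathcal{E}_I] \;\ge\; (1-\delta)-\delta^5 \;\ge\; 1-2\delta,
\]
using $\delta\in(0,0.01)$. Finally, for the ``independent of the randomness of $V\setminus\Gamma(v)$'' clause, the union-bound calculation only invokes the round-$t$ random numbers of vertices in $\Gamma(v)$, and \Cref{clm:sleepMistakes} already comes with the same independence guarantee, so the whole estimate inherits it.

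\textbf{Main obstacle.} The argument itself is short; the real subtlety is measurability bookkeeping. Both the ``intact'' indicator and the round-$t$ marking events depend on the same round-$t$ random numbers $\{\rho_t(u)\}_{u\in\Gamma(v)}$—the intact indicator through the initialization and refinements of $N_t(v)$, the marking directly—so they cannot be treated as independent. Appealing to \Cref{lem:conditioning} (which needs no independence, only a high probability for $\mathcal{E}_I$) is what sidesteps this coupling and makes the estimate clean.
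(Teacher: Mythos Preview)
Your proposal is correct and follows essentially the same approach as the paper: a union bound over the round-$t$ marking events of neighbors gives the $\delta$ bound (using $d_t(v)\le\delta$ from lightness), and then \Cref{lem:conditioning} together with the $1-\delta^5$ bound from \Cref{clm:sleepMistakes} absorbs the conditioning on intactness to yield $1-2\delta$. Your added measurability bookkeeping (that $p_t(u)$ is determined by randomness of rounds $<t$, hence independent of $\rho_t(u)$) is more explicit than the paper's treatment, but the argument is the same.
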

\begin{proof}
In an intact round $t$ where $v$ is light, we have $d_{t}(v)\leq \delta$. Hence, the probability that any of the neighbors of $v$ is marked is at most $\sum_{u\in \Gamma(v)} p_{t}(u) = d_{t}(u) \leq \delta$. Therefore, with probability at least $1-\delta$, no neighbor is marked. This proves the claim, because if node neighbor gets marked, we set $p_{t+1}(v) \gets \min\{2 p_{t}(v), 1/2\}$. Finally, we note that we should condition on that round $t$ is an intact round and $v$ was not mistakenly put to sleep for round $t$. Each round is an intact round with probability at least $1-\delta^5$, as shown in \Cref{clm:sleepMistakes}. Hence, by \Cref{lem:conditioning}, the probability that no neighbor is marked remains at least $1-\delta-\delta^5\geq 1-2\delta$.
\end{proof}

\begin{claim}\label[claim]{clm: heavy}
In each intact round $t$ where $v$ is heavy, with probability at least $1-\delta$, we have $p_{t+1}(v) \gets p_{t}(v)/2$. This statement holds independent of the randomness of nodes $V\setminus \Gamma(v)$.
\end{claim}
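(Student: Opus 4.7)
The plan is to mirror the proof of \Cref{clm:wrongDown}: I will show that in an intact round $t$ where $v$ is heavy, at least one marked neighbor of $v$ survives in $N_t(v)$ with probability $\geq 1-\delta$, because Step~4 then sets $p_{t+1}(v) \gets p_t(v)/2$. The first ingredient is a structural claim: every alive neighbor $u$ of $v$ with $\rho_t(u) \leq p_t(u)$ must still belong to $N_t(v)$ at the start of round $t$. The only way the refinement in Step~2 of an earlier round can remove $u$ from $N_t(v)$ is if $\rho_t(u) > p_r(u)\cdot 2^{t-r}$ for some $r < t$ at which $v$ processed $N_t(v)$, and the smoothness $p_t(u) \leq p_r(u)\cdot 2^{t-r}$ guarantees that a marked $u$ satisfies the opposite inequality. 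Consequently, the event ``some $u \in N_t(v)$ is marked'' contains the event ``some alive $u \in \Gamma(v)$ has $\rho_t(u) \leq p_t(u)$''.

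Next I would condition on all the randomness revealed up to and including round $t-1$, which determines the marking probabilities $\{p_t(u)\}_{u \in \Gamma(v)}$ and the alive/dead statuses. Given this, the events $\{\rho_t(u) \leq p_t(u)\}$ for different $u\in \Gamma(v)$ are mutually independent since the $\rho_t(u)$ are independent uniform $[0,1]$ variables. Heaviness gives $d_t(v) = \sum_{u \in \Gamma(v)} p_t(u) \geq C_\delta = \log(1/\delta)$, so
\[
\Pr[\text{no neighbor marked}] \;\leq\; \prod_{u \in \Gamma(v)}\bigl(1 - p_t(u)\bigr) \;\leq\; \exp(-d_t(v)) \;\leq\; \delta.
\]

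To finish I would invoke \Cref{lem:conditioning} with the intactness event, whose probability is at least $1-\delta^5$ by \Cref{clm:sleepMistakes}, so the conditional probability that some neighbor in $N_t(v)$ is marked is at least $1 - \delta - \delta^5 \geq 1 - \delta$ (absorbing the additive $\delta^5$ by a slightly larger choice of $C_\delta$, as elsewhere in the analysis). The independence from the randomness of $V\setminus \Gamma(v)$ is automatic: once we condition on the prior rounds' history, the only remaining randomness used is $\{\rho_t(u)\}_{u \in \Gamma(v)}$, and by \Cref{clm:sleepMistakes} the intactness event itself also depends only on $\Gamma(v)$'s coins. The only subtle step---and the main place where care is needed---is the refinement argument ensuring that marked neighbors cannot be filtered out of $N_t(v)$; but this drops out directly from the smoothness property built into the marking probabilities.
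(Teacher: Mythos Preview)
Your proposal is correct and follows essentially the same approach as the paper. The one substantive addition is your explicit structural argument that any alive marked neighbor must survive all refinements of $N_t(v)$ (via the smoothness inequality $p_t(u)\le p_r(u)\cdot 2^{t-r}$); the paper's proof leaves this implicit and simply asserts that if some neighbor is marked then $v$ halves its probability. On the numerics, the paper uses $C_\delta=\log_2(1/\delta)$ so that $e^{-d_t(v)}\le \delta/2$, which cleanly absorbs the $\delta^5$ from \Cref{lem:conditioning}; your bound $e^{-d_t(v)}\le \delta$ needs the slack you mention, but this is cosmetic.
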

\begin{proof}
In an intact round $t$ where $v$ is heavy, we have $d_{t}(v)\geq C_{\delta}=\log_{2}({1/\delta})$. Notice that each neighbor $u\in \Gamma(v)$ is marked with probability $p_{t} (v)$, and regardless of whether $u$ is sleeping in this round or not. Thus, the probability that no neighbor of $v$ is marked is at most $\prod_{u\in \Gamma(v)} (1-p_{t}(u)) \leq e^{-\sum_{u\in \Gamma(v)} p_{t}(u)} = e^{-d_{t}(v)} \leq \delta/2$. Hence, with probability at least $1-\delta/2$, at least one neighbor is marked. Finally, we note that we should condition on that round $t$ is an intact round and $v$ was not mistakenly put to sleep for round $t$. Each round is an intact round with probability at least $1-\delta^5$, as shown in \Cref{clm:sleepMistakes}. Hence, by \Cref{lem:conditioning}, even conditioning on round $t$ being an intact round for $v$, we know that the probability that some neighbor of $v$ is marked is at least $1-\delta/2-\delta^5\geq 1-\delta$. Thus, with probability at least $1-\delta$, some neighbor of $v$ is marked and node $v$ sets $p_{t+1}(v) \gets p_{t}(v)/2$.
\end{proof}

\begin{claim}
In each intact round $t$ where $v$ is sleeping, we have $p_{t+1}(v) \gets p_{t}(v)/2$.
\end{claim}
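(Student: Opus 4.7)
The plan is to observe that this claim is essentially a direct readoff from the algorithm description, specifically from Step 3 in the main body of the LOCAL algorithm. Step 3 is explicitly the branch that handles the case where $v$ is sleeping in round $t$, and its prescription is unconditional: it sets $p_{t+1}(v) \gets p_t(v)/2$, transmits the new marking probability to the neighbors, and skips Step 4. So there is nothing probabilistic to verify; the update rule is hard-coded into the algorithm for sleeping nodes.

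Concretely, I would structure the proof as a single short paragraph. First, I recall the control flow of a round for a node $v$: Steps 1 and 2 do not touch $p_{t+1}(v)$ (Step 1 only reads past messages to test whether $v$ should be declared dead, and Step 2 only refines the sets $N_{t''}(v)$ for future rounds $t'' \geq t$). The only places in the algorithm where $p_{t+1}(v)$ is assigned are Steps 3 and 4, and these are mutually exclusive: Step 3 runs precisely when $v$ is sleeping in round $t$, while Step 4 runs precisely when $v$ is not sleeping in round $t$. Since by hypothesis $v$ is sleeping in round $t$, the assignment must come from Step 3, which sets $p_{t+1}(v) \gets p_t(v)/2$.

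The intactness hypothesis plays no role here — it is only needed for the previous claims (\Cref{clm:wrongDown} and \Cref{clm: heavy}), where we argue probabilistically about whether a neighbor is marked. Here the conclusion is a deterministic consequence of being in the ``sleeping'' branch of the per-round code.

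I do not anticipate any obstacle: there is no calculation, no probability bound, and no conditioning to handle. The only mild subtlety worth flagging in the write-up is to make sure the reader sees that Step 2 (which is executed even when $v$ is sleeping, per the footnote in the algorithm description) does not modify $p_{t+1}(v)$ and therefore does not interfere with the assignment made in Step 3.
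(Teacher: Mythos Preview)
Your proposal is correct and matches the paper's own proof: both simply observe that Step 3 of the algorithm deterministically sets $p_{t+1}(v) \gets p_t(v)/2$ whenever $v$ is sleeping in round $t$. Your additional remarks about the control flow and the irrelevance of the intactness hypothesis are accurate but more detailed than the paper's one-line justification.
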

\begin{proof}
Follows directly and deterministically, because the algorithm sets $p_{t+1}(v) \gets p_{t}(v)/2$ in step 3 of each sleeping round.
\end{proof}

\begin{claim}\label[claim]{clm:wrongUp}
In each intact round $t$ where $d_{t}(v)\geq \delta$ but $t$ is not a good type-$2$ round for $v$, with probability at least $1-10\delta$, we have $d_{t+1}(v) \leq 0.7 d_{t}(v)$. In other words, the probability of each round being a wrong up move is at most $10\delta$. This statement holds independent of the randomness of nodes $V\setminus \Gamma^2(v)$.

Moreover, a similar statement holds for sleeping rounds that are not good type-$2$ rounds, in an amortized sense: consider a node $v$ that is put to sleep in a round $t'$ for the period $[t, t+(t-t')]$, and this was not a mistake. We have two cases: 
\begin{itemize}
    \item Either each round $r \in [t, t+(t-t')]$ has $d_{r}(v)\geq \delta$, in which case either $r$ is a good type-$2$ round or, with probability at least $1-10\delta$, we have $d_{r+1}(v) \leq 0.7 d_{r}(v)$. Here, the statement depends only on the randomness of round $r$.
    \item Otherwise---if there is at least one round $r$ such that $d_{r}(v)\leq \delta$---we have $d_{t''}(v) \leq (0.7)^{t''-t'} d_{t'}(v)$ where $t''=t+(t-t')$.
\end{itemize}
\end{claim}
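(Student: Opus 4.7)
My plan is to first unpack the hypothesis ``$t$ is not a type-$2$ good round for $v$ but $d_t(v)\geq \delta$'' as the concrete inequality
\[
A \;:=\; \sum_{u\in H_t(v)} p_t(u) \;>\; \tfrac{19}{20}\, d_t(v),
\]
so most of the ``effective degree'' $d_t(v)$ already sits on neighbors that are heavy or sleeping in round $t$. I would then show that this mass shrinks in expectation: sleeping neighbors of $v$ halve their $p$-values deterministically (by the claim governing step~3 of a sleeping round); heavy non-sleeping neighbors halve with probability at least $1-\delta$ by \Cref{clm: heavy}, where \Cref{lem:conditioning} is used to absorb the $\delta^5$ cost of conditioning on ``intact for $u$'' into the failure probability; and the remaining mass $d_t(v)-A<\tfrac{1}{20}d_t(v)$ can in the worst case only double.

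\textbf{Markov step for the $0.7$ factor.} Define the ``bad mass'' random variable
\[
Z \;:=\; \sum_{u\in H_t(v)\setminus S_t(v)} p_t(u)\cdot \mathbf{1}\!\left[p_{t+1}(u)\neq p_t(u)/2\right],
\]
so that $\mathbb{E}[Z]\leq O(\delta)\cdot d_t(v)$ by linearity. Bounding the heavy-or-sleeping contribution by $\tfrac12(A-Z)+2Z$ and the remainder by $2(d_t(v)-A)$ gives
\[
d_{t+1}(v) \;\leq\; 2\, d_t(v) - \tfrac{3}{2} A + \tfrac{3}{2} Z \;\leq\; \tfrac{23}{40}\, d_t(v) + \tfrac{3}{2} Z.
\]
Hence $d_{t+1}(v)\leq 0.7\, d_t(v)$ holds whenever $Z\leq \tfrac{1}{12}\, d_t(v)$, an event that Markov's inequality bounds by $O(\delta)$; tracking the constants sharply should recover the advertised $10\delta$. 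The ``independent of $V\setminus\Gamma^2(v)$'' clause is immediate because every random variable consulted above is a marking coin $\rho_t(w)$ with $w\in\Gamma^2(v)$.

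\textbf{Sleeping case.} The crucial observation is that whether $v$ itself is sleeping in a given round does not affect the update rules of $v$'s neighbors---sleeping merely suspends step~4 for $v$ and redirects $v$'s own $p$-update in step~3. Hence in sub-case~(a), where $d_r(v)\geq \delta$ throughout $[t,t+(t-t')]$, the per-round argument above applies unchanged and uses only round-$r$ randomness. For sub-case~(b) I would use a purely deterministic calculation: the ``not a mistake'' hypothesis gives $d_{t'}(v) > 2^{4(t-t')}$; the existence of some round $r$ in the sleep interval with $d_r(v)\leq \delta$, together with the trivial bound $d_{q+1}(v)\leq 2 d_q(v)$, yields $d_{t''}(v)\leq 2^{t''-r}\delta \leq 2^{t-t'}\delta$; and since
\[
(0.7)^{t''-t'}\, d_{t'}(v) \;>\; (0.49\cdot 16)^{t-t'} \;=\; (7.84)^{t-t'}
\]
comfortably dominates $2^{t-t'}\delta$ for any $t-t'\geq 1$ and $\delta<0.01$, the required inequality follows.

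\textbf{Expected main obstacle.} The most delicate point is the concentration for $Z$: two heavy neighbors $u,u'\in H_t(v)$ that share a common second-neighbor have positively correlated ``fail-to-halve'' events, which blocks a direct Chernoff-type argument and forces reliance on the weaker Markov inequality. This is why the type-$2$ threshold must be pushed all the way up to $\tfrac{19}{20}$ and the contraction kept at the loose $0.7$ rather than the $0.5$ suggested by the expectation calculation---the slack is precisely what is needed to absorb Markov's weakness while still delivering the stated $10\delta$. The deterministic sub-case~(b) is algebraically routine once the definitions of ``not a mistake'' and of $t''$ are pinned down.
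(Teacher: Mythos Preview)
Your proposal is correct and follows essentially the same route as the paper: both arguments use the ``not type-$2$'' hypothesis to place more than $\tfrac{19}{20}$ of the mass on $H_t(v)$, bound the expected ``bad mass'' that fails to halve via \Cref{clm: heavy}, apply Markov (not Chernoff, for exactly the correlation reason you identify), and then combine the three pieces to reach the $0.7$ factor; your deterministic calculation for sub-case~(b) likewise matches the paper's, which uses the same ``not a mistake $\Rightarrow d_{t'}(v)$ large'' lower bound against the $2^{t-t'}\delta$ upper bound. The only cosmetic differences are that you separate sleeping neighbors (which halve deterministically) from heavy ones, whereas the paper folds them into a single set $H(v)$, and that your bookkeeping goes through the intermediate inequality $d_{t+1}(v)\leq \tfrac{23}{40}d_t(v)+\tfrac{3}{2}Z$ rather than the paper's three-term decomposition; both lead to the same $10\delta$ bound up to the same looseness in constants.
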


\begin{proof}
Consider a round $t$ in which $d_{t}(v)\geq \delta$. Suppose that this is not a good type-$2$ round for $v$, which means $\sum_{u\in H(v)} p_{t}(v) \leq \frac{19}{20} d_t(v)$ where $H(v)\subseteq \Gamma(v)$ is the neighbors of $v$ that are heavy or sleeping in round $t$. Let us call each neighbor $w \in H(v)$ who sets $p_{t+1}(w) \gets \{2p_{t}(w), 1/2\}$ wrong. By \Cref{clm: heavy}, any neighbor $w \in H(v)$ will set  $p_{t+1}(w) \gets p_{t}(w)/2$ with probability at least $1-\delta$. Thus, each neighbor in $w \in H(v)$ is wrong with probability at most $\delta$. Let $W\subset H(v)$ be those that were wrong.
Hence, we have $$\mathbb{E}[\sum_{w\in W(v)} p_{t+1}(w)] \leq  2\delta \cdot \sum_{w\in H(v)} p_{t}(w).$$ Therefore, by Markov's inequality, with probability at least $1-10\delta$, we have  $\sum_{w\in W(v)} p_{t+1}(w) \leq  \frac{1}{10} \cdot \sum_{w\in H(v)} p_{t}(w)$. Every neighbor $w'\in H(v)\setminus W(v)$ sets $p_{t+1}(w') = p_{t}(w')/2$. Hence, with probability at least  $1-10\delta$, we have 
\begin{align*}
d_{t+1}(v) &= \sum_{w\in \Gamma(v)} p_{t+1}(w) \\
&= \sum_{w\in \Gamma(v)\setminus H(v)} p_{t+1}(w)  + \sum_{w\in W(v)} p_{t+1}(w)  + \sum_{w\in (H(v)\setminus W(v))} p_{t+1}(w) \\
&\leq 2 \sum_{w\in \Gamma(v)\setminus H(v)} p_{t}(w) + \frac{1}{10} \cdot \sum_{w\in H(v)} p_{t}(w) + \sum_{w\in (H(v)\setminus W(v))} p_{t}(w)/2 \\
&\leq \frac{2}{20} \cdot \sum_{w\in H(v)} p_{t}(w) + \frac{1}{10} \cdot \sum_{w\in H(v)} p_{t}(w) + \frac{1}{2} \cdot \sum_{w\in H(v)} p_{t}(w) \\
&= 0.7 d_{t}(v) 
\end{align*}
Let us now consider the second part of the claim, where a node $v$ is put to sleep in a round $t'$ for the period $[t, t+(t-t')]$, and this was not by a mistake. The latter part means that at time $t'$, we had $d_{t'}(v)=\sum_{u\in \Gamma(v)} p_{t'}(v) \geq 2^{3(t-t')}$. We have already analyzed rounds $r$ in this period in which we have $d_{r}(v) \geq \delta$. In the complementary case, suppose that there is at least one round $r$ in the interval during which we have $d_{r}(v) \leq \delta$. Then at the end of the interval, we have $d_{r}(v) \leq \delta 2^{(t-t')} \leq (0.5)^{2(t-t')} d_{t'}(v) = (0.5)^{t''-t'} d_{t'}(v)$.
\end{proof}

\begin{claim}\label[claim]{clm:totalWrong} With probability $1-1/\poly(\Delta)$, the total number of wrong moves for a node $v$ is at most $20\delta T$. Moreover, this holds independent of the randomness of nodes $V\setminus \Gamma^2(v)$.
\end{claim}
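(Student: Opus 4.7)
The plan is to bound the probability that each individual round produces a wrong move for $v$ and then apply a concentration argument over the $T$ rounds. By \Cref{clm:wrongDown}, in each intact non-sleeping round $t$ where $v$ is light, the probability of a wrong down move is at most $2\delta$; by \Cref{clm:wrongUp}, in each intact non-sleeping round $t$ where $d_t(v) \geq \delta$ and $t$ is not a good type-$2$ round, the probability of a wrong up move is at most $10\delta$. Since ``light'' means $d_t(v) \leq \delta$, the two regimes essentially partition the non-sleeping intact rounds, so the probability that a given non-sleeping intact round is a wrong move is at most $10\delta$ conditional on all previous randomness within $\Gamma^2(v)$.

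For sleeping intact (non-mistake) rounds I would invoke the second part of \Cref{clm:wrongUp}. In the first sub-case of that claim, every round of the sleeping period has $d_r(v) \geq \delta$, so each round again contributes a wrong-up probability of at most $10\delta$ with the same conditional-independence property. In the second sub-case, at least one round in the period has $d_r(v) \leq \delta$, and then $d_{t''}(v) \leq (0.7)^{t''-t'} d_{t'}(v)$; this tells us that the sleeping period as a whole behaves like a sequence of effective down moves with an accumulated shrinkage factor of $(0.7)^{t''-t'}$, so when we amortize the $d$-decrease across the period the sleeping rounds contribute no wrong up moves.

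Let $X_t \in \{0,1\}$ be the indicator that round $t$ is a wrong move for $v$, and let $\mathcal{F}_{t-1}$ denote the history of randomness within $\Gamma^2(v)$ through round $t-1$. The preceding paragraphs show that $\Pr[X_t = 1 \mid \mathcal{F}_{t-1}] \leq 10\delta$ uniformly over histories, so the sum $\sum_{t=1}^T X_t$ is stochastically dominated by a $\mathrm{Bin}(T, 10\delta)$ random variable. A standard multiplicative Chernoff bound (equivalently, a Chernoff bound for supermartingales) then yields
\[
\Pr\Bigl[\sum_{t=1}^T X_t \geq 15\delta T\Bigr] \leq \exp\bigl(-\Omega(\delta T)\bigr) = 1/\poly(\Delta),
\]
since $T = \Theta(\log \Delta)$ and $\delta$ is a constant. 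Combining this with the $\leq \delta T$ bound on mistaken rounds from \Cref{clm:sleepMistakes} and a small slack for the amortized bookkeeping of sleeping periods yields a total wrong-move count of at most $20\delta T$ with probability $1 - 1/\poly(\Delta)$. Independence from $V \setminus \Gamma^2(v)$ is inherited directly from the per-round claims, each of which explicitly carries that property.

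The main obstacle is handling the cross-round dependencies correctly: the $X_t$ are not independent, because they share state through the marking probabilities $p_t(u)$ of the neighbors of $v$ and the gradually refined sets $N_t(v)$. The resolution is to frame the tail bound as a \emph{conditional} Chernoff (equivalently, Azuma--Hoeffding on the associated supermartingale $\sum_{s\leq t}(X_s - \E[X_s\mid \mathcal{F}_{s-1}])$), where only the per-round conditional upper bound of $10\delta$ is needed. A secondary subtlety is making the amortized charging for sleeping periods (second sub-case of \Cref{clm:wrongUp}) explicit enough that the corresponding rounds are neither double-counted against the Chernoff bound above nor silently dropped from the total.
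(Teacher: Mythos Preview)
Your proposal is correct and follows essentially the same approach as the paper: bound the per-round probability of a wrong move by $10\delta$ via \Cref{clm:wrongDown} and \Cref{clm:wrongUp}, then apply a Chernoff-type concentration over the $T$ rounds, and inherit the locality statement from the per-round claims. The paper's version is terser: it simply observes that whether an intact round is a wrong move depends only on the fresh marking randomness $\{\rho_t(u)\}_u$ of that round and is therefore independent across rounds, so a plain Chernoff bound applies directly. Your filtration/supermartingale formulation is equivalent and arguably more explicit about why the concentration goes through. Two minor points: the separate treatment of sleeping periods and the amortized sub-case of \Cref{clm:wrongUp} is not needed here (the first part of \Cref{clm:wrongUp} already covers every intact round with $d_t(v)\geq\delta$, sleeping or not; the amortized statement is used later in \Cref{lem:localAnalysis}, not in this claim), and adding back the $\delta T$ mistaken rounds is unnecessary since wrong moves are by definition restricted to intact rounds.
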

\begin{proof}
By \Cref{clm:wrongDown,clm:wrongUp}, the probability of each intact round being a wrong down move or a wrong up move is at most $10\delta$. Furthermore, whether an intact round is a wrong up or down move depends only on the randomness used in the marking of that round, and is independent of other rounds. Hence, by Chernoff bound, With probability $1-1/\poly(\Delta)$, the total number of wrong moves for a node $v$ is at most $20\delta T$. In determining whether a round is a wrong up or down move, we only examined the randomness of $v$, its neighbors and the neighbors of its neighbors. Hence, the statement holds independent of the randomness of nodes $V\setminus \Gamma^2(v)$. 
\end{proof}

\begin{claim} \label[claim]{clm:good1}
In each intact good round of type-$1$, with probability at least $\delta^2/3$, node $v$ joins the independent set $I$. This statement holds independent of the randomness of nodes $V\setminus \Gamma(v)$.
\end{claim}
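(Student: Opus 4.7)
The plan is to directly bound the probability that $v$ is marked and none of its neighbors is marked in round $t$, and then use \Cref{lem:conditioning} to pass to the conditioning on ``intact''. In a good type-$1$ round, $v$ is not sleeping, $p_t(v) = 1/2$, and $d_t(v) = \sum_{u \in \Gamma(v)} p_t(u) \leq C_\delta = \log_2(1/\delta)$. By step 4 of the algorithm, $v$ joins $I$ precisely when $v$ is marked and no neighbor in $N_t(v)$ is marked. The smoothness inequality $p_t(u) \leq p_r(u) \cdot 2^{t-r}$ together with the refinement rule in step 2 ensures that any neighbor $u$ which ends up marked in round $t$ is never dropped from $N_t(v)$; hence it suffices to lower bound the probability that $v$ is marked while no neighbor in all of $\Gamma(v)$ is marked.

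The main computation uses the independence of the round-$t$ random values $\rho_t(w)$ across nodes $w$, and in particular the independence of $\rho_t(v)$ from $\{\rho_t(u)\}_{u \in \Gamma(v)}$, to factor
\[
\Pr\bigl[v \text{ marked and no neighbor in } \Gamma(v) \text{ marked}\bigr] \;=\; p_t(v) \cdot \prod_{u \in \Gamma(v)} \bigl(1 - p_t(u)\bigr).
\]
Since every $p_t(u) \leq 1/2$, I would invoke the elementary inequality $1-x \geq 4^{-x}$ valid on $[0, 1/2]$ to obtain
\[
\prod_{u \in \Gamma(v)} \bigl(1-p_t(u)\bigr) \;\geq\; 4^{-d_t(v)} \;\geq\; 4^{-\log_2(1/\delta)} \;=\; \delta^2.
\]
Combined with $p_t(v) = 1/2$, this yields $\Pr[v \text{ joins } I] \geq \delta^2/2$.

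Finally, to pass to the conditional probability given that round $t$ is intact, I apply \Cref{lem:conditioning} with $\mathcal{E}' = \{v \text{ joins } I\}$ and $\mathcal{E}_I = \{t \text{ is intact for } v\}$; \Cref{clm:sleepMistakes} supplies $\Pr[\mathcal{E}_I] \geq 1 - \delta^5$, so
\[
\Pr[v \text{ joins } I \mid t \text{ intact}] \;\geq\; \frac{\delta^2}{2} - \delta^5 \;\geq\; \frac{\delta^2}{3}
\]
for $\delta \leq 0.01$. Independence from the randomness of $V \setminus \Gamma(v)$ follows because every quantity used depends only on $\rho_t(v)$, on $\{\rho_t(u)\}_{u \in \Gamma(v)}$, and on the $p_t$-values for $v$ and its neighbors (which are fixed by the good type-$1$ hypothesis). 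The subtlety I would be most careful about is justifying the product form for ``no neighbor marked'' under the conditioning on good type-$1$: although the initial construction of $N_t(v)$ in round $1$ does read the $\rho_t(u)$ values, conditioning on the good type-$1$ event only fixes the $p_t(u)$'s and the sleep/awake status of $v$, and any residual correlation that this conditioning induces on the $\rho_t(u)$'s is absorbed into the small error already handled by \Cref{lem:conditioning}.
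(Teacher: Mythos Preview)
Your proposal is correct and follows essentially the same approach as the paper: factor the joining probability as $p_t(v)\cdot\prod_{u\in\Gamma(v)}(1-p_t(u))$, use $1-x\ge 4^{-x}$ to get $\delta^2/2$, and then subtract the $\delta^5$ slack via \Cref{lem:conditioning} and \Cref{clm:sleepMistakes}. You are in fact somewhat more careful than the paper in two places---explicitly justifying why bounding over all of $\Gamma(v)$ rather than $N_t(v)$ gives a valid lower bound, and flagging the conditioning subtlety---but the core argument is the same.
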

\begin{proof}
In an intact good round of type-$1$, we have $d_{t}(v)\geq C_{\delta}=\log(1/\delta)$ and $p_{t}(v)=1/2$. In this round, node $v$ is marked with probability $1/2$ and independent of that, the probability that none of its neighbors is marked is at least $\prod_{u\in \Gamma(v)}(1 - p_{t}(u)) \geq 4^{-\sum_{u\in \Gamma(v)} p_{t}(u)} = \delta^2$. Finally, we had conditioned on $t$ being an intact round. Since any round is intact with probability at least $1-\delta^5$,as shown in \Cref{clm:sleepMistakes}, we conclude that the probability that node $v$ joins the independent set $I$ is at least $\delta^2/2 - \delta^5 \geq \delta^2/3.$
\end{proof}

\begin{claim}\label[claim]{clm:good2}
In each intact good round of type-$2$, with probability at least $\delta^3/50$, one of the neighbors of node $v$ joins the independent set $I$. This statement holds independent of the randomness of nodes $V\setminus \Gamma^2(v)$.
\end{claim}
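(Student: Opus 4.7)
The plan is a second-moment argument on the random variable $N$ that counts how many neighbors of $v$ lying in $L_t(v) := \Gamma(v) \setminus H_t(v)$ join $I$ in round $t$. Note that $L_t(v)$ is precisely the set of non-sleeping, non-heavy neighbors of $v$---those with a real chance of joining $I$ this round. The type-$2$ good hypothesis gives $D := \sum_{u \in L_t(v)} p_t(u) \geq d_t(v)/20 \geq \delta/20$, and each $u \in L_t(v)$ is not heavy so $d_t(u) \leq C_\delta = \log_2(1/\delta)$. The same ``$1-x \geq 4^{-x}$'' estimate used in the proof of \Cref{clm:good1} then yields $\Pr[A_u] \geq p_t(u)\prod_{w \in \Gamma(u)}(1-p_t(w)) \geq p_t(u) \cdot \delta^2$, where $A_u := \{u \text{ joins } I\}$; summing gives $\mathbb{E}[N] \geq \delta^2 D$.

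For the second moment, the key observation is that for distinct $u, u' \in L_t(v)$: if $u, u'$ are adjacent then $\Pr[A_u \cap A_{u'}] = 0$ (an independent set cannot contain both), and if they are non-adjacent then $\Gamma(u) \subseteq (\Gamma(u) \cup \Gamma(u')) \setminus \{u, u'\}$, which yields
\[
\Pr[A_u \cap A_{u'}] \;\leq\; p_t(u)\,p_t(u')\prod_{w \in \Gamma(u)}(1-p_t(w)) \;=\; p_t(u') \cdot \Pr[A_u].
\]
Summing over $u'$ first yields $\mathbb{E}[N(N-1)] \leq D \cdot \mathbb{E}[N]$, hence $\mathbb{E}[N^2] \leq (1+D)\mathbb{E}[N]$. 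Paley--Zygmund (i.e., $\Pr[N \geq 1] \geq \mathbb{E}[N]^2/\mathbb{E}[N^2]$) then produces
\[
\Pr[N \geq 1] \;\geq\; \frac{\delta^2 D}{1+D} \;\geq\; \delta^2 \cdot \min\!\left(\frac{D}{2}, \frac{1}{2}\right) \;\geq\; \frac{\delta^3}{40},
\]
using $D \geq \delta/20$.

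To match the claim, which conditions on round $t$ being intact for $v$, we apply \Cref{lem:conditioning} together with \Cref{clm:sleepMistakes} (the intact event has probability at least $1-\delta^5$): the conditional probability is at least $\delta^3/40 - \delta^5 \geq \delta^3/50$ for $\delta < 0.01$. The independence from the randomness of $V \setminus \Gamma^2(v)$ is then immediate: once the state at the start of round $t$ is fixed (determining $p_t(\cdot)$, $L_t(v)$, and each $d_t(u)$ for $u \in \Gamma(v)$), all probabilities above are over the round-$t$ markings $\rho_t(\cdot)$ of vertices in $\Gamma^2(v)$ alone.

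The main obstacle is handling the correlation between the events $A_u$. A first-order Bonferroni lower bound is spoiled as soon as $D \gtrsim 1$, because the pairwise correction $\sum_{u < u'} \Pr[A_u \cap A_{u'}]$ can be as large as $D^2$ under the crude bound $\Pr[A_u \cap A_{u'}] \leq p_t(u)p_t(u')$, which overwhelms the first-order sum $\mathbb{E}[N] = \Theta(\delta^2 D)$. The second-moment route sidesteps this because the sharper pairwise bound $\Pr[A_u \cap A_{u'}] \leq p_t(u')\Pr[A_u]$ compresses the correction into $D \cdot \mathbb{E}[N]$, and the ratio bound $\Pr[N \geq 1] \geq \mathbb{E}[N]^2/\mathbb{E}[N^2]$ then extracts $\Omega(\mathbb{E}[N]/(1+D))$ uniformly in $D$.
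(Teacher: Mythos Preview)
Your proof is correct, but it takes a genuinely different route from the paper. The paper argues via a sequential-revelation (``first marked witness'') trick: scan the neighbors in $L_t(v)$ one by one, stop at the first marked vertex $w$ (which exists with probability at least $1 - \prod_{u\in L_t(v)}(1-p_t(u)) \geq 1 - e^{-\delta/20} \geq \delta/40$), and then observe that, conditioned on the scan so far, the remaining neighbors of $w$ are still fresh, so $w$ has no marked neighbor with probability at least $4^{-d_t(w)} \geq \delta^2$. Multiplying gives $\delta^3/40$, and the same conditioning via \Cref{lem:conditioning} yields $\delta^3/50$.

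Your second-moment / Paley--Zygmund argument reaches the identical $\delta^3/40$ bound by a different mechanism: rather than isolating a single witness via deferred decisions, you control the correlations globally through the key inequality $\Pr[A_u\cap A_{u'}]\le p_t(u')\Pr[A_u]$ for non-adjacent $u,u'$, which compresses $\mathbb{E}[N^2]$ to $(1+D)\,\mathbb{E}[N]$. The paper's approach is slightly more elementary (no variance computation, just conditional independence along the scan), while yours is more symmetric and arguably cleaner in that it never singles out a particular ordering of $L_t(v)$. Both handle the correlation obstacle you identify in your last paragraph, just differently: the paper sidesteps it by revealing one marking at a time, you by the sharp pairwise bound.
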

\begin{proof}
Let $H(v)\subseteq \Gamma(v)$ be the neighbors of $v$ that are heavy in round $t$ or sleeping. In an intact round of type-$2$, we have $d_{t}(v)\geq \delta$ and $\sum_{u\in H(v)} p_{t}(v) \leq \frac{19}{20} d_t(v)$. Thus, $\sum_{u\in (\Gamma(v)  \setminus H(v))} p_{t}(v) \geq \delta/20$. We scan the neighbors in  $(\Gamma(v)  \setminus H(v))$ one by one until we find the first marked neighbor. The probability that we find no marked neighbor is at most $\prod_{u\in \Gamma(v) \setminus H(v)} (1-p_{t}(u)) \leq e^{-\sum_{u\in (\Gamma(v)  \setminus H(v))} p_{t}(v)} \leq e^{-\delta/20} \geq 1-\delta/40$. In other words, with probability at least $\delta/40$ we find a marked neighbor in $(\Gamma(v)  \setminus H(v))$. Let $w$ be such a neighbor. Notice that $w$ is not sleeping and we have $d_{t}(w)\leq C_\delta = \log(1/\delta)$. Hence, the probability that none of the neighbors of $w$ marked is at least $\prod_{u\in \Gamma(w)}(1 - p_{t}(u)) \geq 4^{-\sum_{u\in \Gamma(w)} p_{t}(u)} = \delta^2.$ Therefore, with probability at least $\delta^3/40$, at least one of the neighbors of $v$ joins the independent set $I$. Finally, we had conditioned on $t$ being an intact round. Since any round is intact with probability at least $1-\delta^5$, as shown in \Cref{clm:sleepMistakes}, we conclude that the probability that a neighbor of node $v$ joins the independent set $I$ is at least $\delta^3/40 - \delta^5 \geq \delta^{3}/50.$
\end{proof}
\subsubsection{Proof of the Main Statement}
\begin{lemma}\label[lemma]{lem:localAnalysis}
In the $T=\Theta(\log \Delta)$ round algorithm, for each node $v$, with probability $1-1/\poly(\Delta)$, we have $v\in (I \cup \Gamma(I)).$ Moreover, this statement holds independent of the randomness of nodes $V\setminus \Gamma^2(v)$.
\end{lemma}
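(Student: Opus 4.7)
The plan is to (i) condition on the high-probability ``regularity'' events established by the earlier claims, (ii) show deterministically within that conditioning that $v$ experiences $\Omega(T)$ intact good rounds of type-1 or type-2, and (iii) conclude by a Chernoff bound using \Cref{clm:good1,clm:good2}.

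For step (i), let $\mathcal{R}$ be the event that the number of mistaken rounds for $v$ is at most $\delta T$ and the number of wrong up/down moves for $v$ is at most $20\delta T$. By \Cref{clm:sleepMistakes,clm:totalWrong}, $\Pr[\mathcal{R}]\geq 1-1/\poly(\Delta)$, and $\mathcal{R}$ depends only on the randomness of $\Gamma^2(v)$. It remains to establish, deterministically inside $\mathcal{R}$, that at least $cT$ rounds are intact good rounds for some constant $c=c(\delta)>0$.

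For step (ii), I split the $T$ rounds by whether $d_t(v)\geq\delta$. In the \emph{sparse case}, suppose at most $T/3$ rounds satisfy $d_t(v)\geq\delta$, so at least $2T/3$ rounds have $d_t(v)<\delta$. Any correct sleeping interval starting at round $r$ with length $L$ requires $d_r(v)\geq 2^{3L}\geq\delta$, so the correct sleeping rounds are charged to the few rounds with large $d$, and their total count is controlled; hence the bulk of the rounds with $d_t(v)<\delta$ are intact non-sleeping light rounds. In each such round that is not a wrong-down move, $p_{t+1}(v)=\min\{2p_t(v),1/2\}$, so the potential $\Phi_t:=\log_2(1/p_t(v))\in[1,T+O(1)]$ is driven to its floor $1$ within $O(T)$ such rounds; after that, each subsequent intact non-sleeping round with $d_t(v)<\delta$ is a type-1 good round, yielding $\Omega(T)$ of them. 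In the \emph{dense case}, at least $T/3$ rounds have $d_t(v)\geq\delta$. Either $\Omega(T)$ of these are already good type-2 (and we are done), or $\Omega(T)$ are $d_t(v)\geq\delta$ rounds that are neither good type-2 nor mistaken nor wrong-up moves. By \Cref{clm:wrongUp} (using its amortized second bullet to absorb any correct sleeping period contained in this set), each such round satisfies $d_{t+1}(v)\leq 0.7\,d_t(v)$, while otherwise $d_t(v)$ at most doubles per round. Since $d_1(v)\leq 1/2$ and $d_t(v)\leq\Delta$, more than $O(T)$ such contractions against the limited budget of doublings would force $d_t(v)<\delta$, contradicting $d_t(v)\geq\delta$ in the chosen rounds; so the dense case also produces $\Omega(T)$ good type-2 rounds.

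For step (iii), by \Cref{clm:good1,clm:good2} each intact good round independently has probability at least $\delta^3/50$ of placing $v$ or a neighbor of $v$ into $I$, where the per-round event depends only on the marking randomness of that round within $\Gamma^2(v)$; conditioning on the history only strengthens this via \Cref{lem:conditioning}. A standard martingale Chernoff bound on the $\Omega(T)=\Omega(\log\Delta)$ intact good rounds shows that the probability none of them succeed is $\exp(-\Omega(T))=1/\poly(\Delta)$, which combined with $\Pr[\neg\mathcal{R}]$ yields the lemma; locality in $\Gamma^2(v)$ is preserved throughout since every invoked claim respects it. The main obstacle will be the dense-case bookkeeping: correctly chaining \Cref{clm:wrongUp}'s per-round contractions with its amortized sleeping-interval variant, while absorbing the $\delta T$ mistaken rounds and $20\delta T$ wrong-up moves into the ``doubling'' side of the ledger, is the delicate quantitative step that pins down the constant $c(\delta)$.
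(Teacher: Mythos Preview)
Your plan is essentially the paper's: condition on few mistakes and wrong moves via \Cref{clm:sleepMistakes,clm:totalWrong}, run a potential argument on $d_t(v)$ to control the non-light rounds and on $p_t(v)$ to extract type-1 good rounds, then finish with \Cref{clm:good1,clm:good2}. The paper does not case-split; it assumes fewer than $0.1T$ type-2 good rounds, bounds the number of moderate/heavy/sleeping rounds via the $d_t(v)$ potential, and then shows the remaining light rounds yield at least $0.1T$ type-1 good rounds.

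The concrete gap is your dense case. With the threshold at $T/3$, the claim ``$\Omega(T)$ contractions plus a limited doubling budget force $d_t(v)<\delta$'' does not go through: the up to $2T/3$ rounds with $d_t(v)<\delta$ can each double $d$, so the doubling budget is of order $2T/3$, while the contractions (each worth $\log_2(1/0.7)\approx 0.51$ in potential) total only about $0.17T$. Thus $d$ can comfortably stay above $\delta$ on all the dense rounds with zero type-2 good rounds and no contradiction arises. Pushing the threshold high enough to create a genuine contradiction in the dense case leaves too few light rounds for the $p$-potential argument in the sparse case (you need roughly twice as many $p$-doubling rounds as $p$-halving rounds, plus $\log\Delta$, to reach $p=1/2$). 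The paper sidesteps this tension by not splitting into cases at all: under the assumption of few type-2 good rounds it bounds the total number of moderate/heavy/sleeping rounds directly, and then always runs the light-round argument on the complement. Your own closing remark already anticipates that this quantitative bookkeeping is the crux; the fix is to drop the dichotomy and follow the paper's sequential accounting.
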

\begin{proof}
To prove the lemma, we show that with probability at least $1-1/\poly(\Delta)$, we have at least $0.1T$ good rounds for $v$ (in fact, without examining the marking outcome of the good rounds). Then, we use \Cref{clm:good1} and \Cref{clm:good2} to argue that with probability at least $1-1/\poly(\Delta)$, we must have $v\in (I \cup \Gamma(I)).$

First, note that by \Cref{clm:totalWrong} and \Cref{clm:sleepMistakes}, we have at most $25\delta T$ rounds that are wrong moves or mistakenly sleeping. Recall also that $\delta$ is a constant that we adjust to be desirably small.

Next, consider the set of all moderate, heavy, and sleeping rounds for $v$. Each such round is either a good type-$2$ round or a round $t$ in which we have $d_{t+1}\leq 0.7 \cdot d_{t}(v)$, or else it is either a wrong up move round or a mistakenly sleeping round (we separated the latter two kind as they are a tunably-small fraction, by decreasing $\delta$). Suppose we have less than $0.1T$ good rounds of type $2$, as otherwise we are done. In round $1$, we have $d_{t}(v)\leq 1/2$. In any round (and as particularly relevant in this argument, any good type-$2$ round or any wrong or mistaken round), we know that $d_{t+1}(v) \leq 2d_{t}(v)$. We have at most $0.1T+25\delta T$ such $2$-factor increases during moderate, heavy, and sleeping rounds. Any remaining round among moderate, heavy, and sleeping rounds is a decrease by an $0.7$ factor, i.e., where we have $d_{t+1}\leq 0.7 \cdot d_{t}(v).$ Since any increase round increases $d(v)$ by at most a $2$ factor and any decrease round reduces $d(v)$ by a factor of at most $0.7$, and given that $(0.7)^2 \leq 1/2$, we can have at most $0.3T + 75\delta T$ rounds in which $v$ is moderate, heavy, or sleeping.  

What remains are light rounds for $v$, and a tunably-small minority of wrong or mistaken rounds (namely at most $25\delta T$). So, we must have at least $0.7T-100\delta T$ light rounds. Each light round is either a wrong down move or we have $p_{t+1}(v)=\min\{2\cdot p_{t}(v), 1/2\}$. We start with $p_{1}(v)=\frac{1}{2\Delta}$ and in each light round that is not a wrong down move, we have $p_{t+1}(v)=\min\{2\cdot p_{t}(v), 1/2\}$. In each other round, we have $p_{t+1}(v)\geq p_{t}/2$. Hence, in at least $(0.7T-100\delta T) - (\log \Delta +1 + 0.3T + 75\delta T) \geq 0.1T$ light rounds, node $v$ is not sleeping and we have $p_{t}(v)=1/2$. Here, we have used that $\delta < 0.001$. These are good rounds of type $2$. 

In conclusion, node $v$ has at least $0.1T$ good rounds, with probability $1-1/\poly(\Delta)$. By \Cref{clm:good1} and \Cref{clm:good2}, in each good round, node $v$ joins the independent set $I$ or has a neighbor join $I$ with probability at least $\delta^3/50$. Hence, the probability that $v\notin (I \cup \Gamma(I))$ is at most $(1-\delta^3/50)^{0.1 T}\leq 1/\poly(\Delta)$, where we use that the constant in the length $T=\Theta(\log \Delta)$ is chosen large enough (e.g., greater than $1000/\delta^{3}$, where $\delta$ itself was a small constant fixed earlier).
\end{proof}
\begin{lemma}\label[lemma]{lem:shattering}
Let $I$ be the independent set computed in \Cref{lem:localAnalysis}. Let $H$ be the subgraph of $G$ induced by the subset $V\setminus (I \cup \Gamma(I))$ of vertices. Then, with probability $1-1/\poly(n)$, each connected component of $H$ has at most $O(\Delta^4 \log n)$ vertices.  Furthermore, for each node $v$ and for any $\eps>0$, the connected component of $v$ in $H$ has size at most $O(\Delta^4 \log (1/\eps))$.
\end{lemma}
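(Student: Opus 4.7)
The plan is a standard "shattering" argument of the type pioneered by Beck (1991) and used in prior MIS work. Set $B := V\setminus (I \cup \Gamma(I))$, so $H = G[B]$. By \Cref{lem:localAnalysis}, for any fixed vertex $v$ one has $\Pr[v \in B] \le q$ for some $q = 1/\poly(\Delta)$, and moreover the event $v\in B$ is determined by the randomness of $\Gamma^{2}(v)$. By taking the hidden constant in $T = \Theta(\log \Delta)$ large enough, we can make $q \le \Delta^{-K}$ for any desired constant $K$. The key consequence is: if $S\subseteq V$ has pairwise $G$-distance at least $5$, then $\Gamma^{2}(u)\cap \Gamma^{2}(w)=\emptyset$ for distinct $u,w\in S$, so the bad events $\{u\in B\}_{u\in S}$ are mutually independent and $\Pr[S\subseteq B] \le q^{|S|}$.

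The first step is a structural claim: any connected subset $C\subseteq V$ (in $G$) with $|C|\ge T$ contains a ``witness set'' $S\subseteq C$ such that (i) $|S|\ge T/(2\Delta^{4})$, (ii) the pairwise $G$-distance between any two vertices of $S$ is at least $5$, and (iii) $S$ is connected in $G^{9}$. I take $S$ to be any maximal independent set of $G^{4}[C]$, where $G^{4}$ denotes the graph on $V$ in which two vertices are adjacent iff their $G$-distance is at most $4$. Since $G^{4}$ has maximum degree at most $\Delta^{4}$, property (i) follows from $|S|\ge |C|/(\Delta^{4}+1)$, and property (ii) is immediate from $G^{4}$-independence. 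For property (iii), given $s_{1},s_{2}\in S$, choose any path in $C$ from $s_{1}$ to $s_{2}$; by maximality, every vertex of the path lies within $G$-distance $4$ of some vertex of $S$, and since consecutive path vertices have $G$-distance $1$, their ``closest'' $S$-vertices have $G$-distance at most $9$. Thus a walk in $S$ connects $s_{1}$ to $s_{2}$, each of whose steps is a $G^{9}$-edge.

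Next I apply the standard counting fact that in a graph of maximum degree $d$, the number of connected vertex subsets of size $s$ containing a fixed vertex is at most $(4d)^{s}$. Applied to $G^{9}$, which has maximum degree at most $\Delta^{9}$, the number of candidate witness sets of size $s$ (rooted at some vertex) is at most $n\cdot (4\Delta^{9})^{s}$. Combined with $\Pr[S\subseteq B]\le q^{|S|}$, a union bound gives
\[
\Pr\bigl[\exists \text{ connected component of } H \text{ of size}\ge T\bigr] \;\le\; n\cdot \bigl(4\Delta^{9}\cdot q\bigr)^{T/(2\Delta^{4})}.
\]
Choosing $q \le \Delta^{-20}$ (which we may by making the constant in $T=\Theta(\log \Delta)$ of \Cref{lem:localAnalysis} large enough) makes the base at most $\Delta^{-10}$, so $T = \Theta(\Delta^{4}\log n)$ drives the right-hand side below $1/\poly(n)$. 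For the per-vertex refinement, the same calculation with the root fixed to $v$ (so the leading factor of $n$ is removed) yields that $v$'s component has size at most $O(\Delta^{4}\log(1/\eps))$ with probability at least $1-\eps$.

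The main place where I expect some care is the interplay of constants between the $\Delta^{4}$ blow-up coming from the $G^{4}$-greedy selection, the $\Delta^{9}$ counting base, and the per-vertex bad probability $q$: one must confirm that the polynomial in $q=1/\poly(\Delta)$ afforded by \Cref{lem:localAnalysis} can be made large enough to beat the combinatorial factor, which ultimately is a matter of enlarging the constant hidden in $T=\Theta(\log\Delta)$. Everything else reduces to well-known counting inequalities.
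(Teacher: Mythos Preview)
Your proposal is correct and follows essentially the same standard shattering argument as the paper: both use the independence guarantee of \Cref{lem:localAnalysis} for vertices at $G$-distance at least $5$, extract a witness set of well-separated vertices from any large component via a greedy $G^{4}$-packing, observe that this witness set is connected in a bounded power of $G$, and finish by counting connected substructures and taking a union bound. The only cosmetic differences are that the paper phrases the witness as a tree embedded in $G^{10}$ rather than a $G^{9}$-connected set, and sketches the greedy extraction more informally; your version is if anything slightly more explicit.
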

\begin{proof}[Proof Sketch] This is a standard proof, which is sometimes called a shattering proof, and the first version of it appeared in Beck's breakthrough algorithm for Lovasz Local Lemma~\cite{beck1991algorithmic}. See, e.g., \cite[Lemma 4.6]{rubinfeld2011LCA} or \cite[lemma 4.2]{ghaffari2016MIS} for applications of this in the case of the MIS problem. To be self-contained, we provide a short and informal (and slightly imprecise) sketch. 

By \Cref{lem:localAnalysis}, each node $v$ is in $H$ with probability at most $1/\Delta^{20}$. We note that any probability $1/\Delta^c$, for any constant $c$, would be achievable in \Cref{lem:localAnalysis}, by adjusting the constant in the round complexity $T=\Theta(\log \Delta)$ of the LOCAL algorithm. The bound of $1/\Delta^{20}$ will suffice for our argument in this lemma. Also, for any two nodes that are $5$ or more hops apart, this statement holds independently, as for each of them it depends only on the randomness of nodes within its distance $2$. Let $T$ be a tree embedded in $G^10$ such that any two neighbors in $T$ have distance at most $10$ in $G$ and any two nodes in $T$ have distance at least $5$ in $G$. The probability that all vertices of $T$ are in $H$ is at most $(\frac{1}{\Delta^{20}})^{|T|}$. On the other hand, there are at most $n (4\Delta^{10})^{|T|}$ many trees of size $|T|$ that can be embedded in graph $G^{10}$. Namely, there are $n$ places as the root, $4^{|T|}$ different tree topologies, and at most $\Delta^{10}$ ways to choose each node in the tree once its parent has been chosen, among the less than $\Delta^{10}$ nodes within $10$ hops of its parent. Therefore, for $|T|=\Theta(\log n)$, with probability at least $1-1/\poly(n)$, there is no such tree in $H$. Hence, with probability at least $1-1/\poly(n)$, there is no connected component of size exceeding $\Theta(\Delta^4 \log n)$ in $H$, because any such component includes all the vertices of a tree of size $\Theta(\log n)$ embedded in $G^{10}$ with the aforementioned properties (as can be seen by a greedy argument of taking any vertex in the component of $H$ and removing others within its distance $4$, which are at most $\Theta(\Delta^4)$ many). 

Also note that if we fix node $v$ itself and we are only interested in the component of $v$, we have already fixed the root of the tentative embedding of $|T|$. Then, by setting $|T|=\Theta(\log 1/\eps)$, we get that with probability $1-\eps$, the component of $v$ has at most $\Theta(\Delta^4 \log n)$ vertices.
\end{proof}

\begin{corollary}\label[corollary]{crl:LCA}
There is a local computation algorithm that, when queried on any single node $v$, performs $\poly(\Delta)$ computation and answers whether $v \in I$, $v\in \Gamma(I)$, or $v\notin (I \cup \Gamma(I))$. Moreover, letting $H$ be the subgraph of $G$ induced by the subset $V\setminus (I \cup \Gamma(I))$ of vertices. Then, with probability $1-1/\poly(n)$, each connected component of $H$ has at most $O(\Delta^4 \log n)$ vertices.
\end{corollary}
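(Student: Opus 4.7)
The plan is to obtain Corollary \ref{crl:LCA} as a direct combination of Lemma \ref{lem:LCA} and Lemma \ref{lem:shattering}. Lemma \ref{lem:LCA} already constructs an LCA that, when queried on any single vertex $v$, uses $\poly(\Delta)$ queries and $\poly(\Delta)$ computation and reports whether $v \in I$, $v \in \Gamma(I)$, or $v \notin (I \cup \Gamma(I))$, where $I$ is the near-maximal independent set produced by the LOCAL algorithm of Section \ref{subsec:LOCALalgo}. Lemma \ref{lem:shattering} asserts that for the induced subgraph $H = G[V \setminus (I \cup \Gamma(I))]$, with probability at least $1 - 1/\poly(n)$, every connected component of $H$ has at most $O(\Delta^4 \log n)$ vertices. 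So the body of the corollary is exactly the union of these two conclusions.

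To prove the corollary I would simply invoke the LCA of Lemma \ref{lem:LCA} verbatim as the algorithm, and then invoke Lemma \ref{lem:shattering} for the structural guarantee on $H$. The one conceptual point worth explicitly checking is that both statements refer to the \emph{same} independent set $I$. This is clean: in Section \ref{subsec:LOCALalgo} the random tape $\{\rho_t(u)\}_{u,t}$ is fixed at the outset, after which the LOCAL algorithm (and hence the set $I$ it produces) is completely deterministic, and the LCA of Lemma \ref{lem:LCA} simulates that LOCAL execution exactly given the tape. Therefore the set $I$ whose membership the LCA reports is precisely the set $I$ to which the shattering guarantee of Lemma \ref{lem:shattering} applies; no realignment between the two statements is needed.

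I do not anticipate a substantive obstacle, since all of the heavy technical work has already been carried out in the two cited lemmas: the amortized influence bound that makes the per-query simulation cost $\poly(\Delta)$ was established in Lemma \ref{lem:LCA}, and the probabilistic argument that each node is in $I \cup \Gamma(I)$ except with probability $1/\poly(\Delta)$ (together with the Beck-style shattering count over embedded trees in $G^{10}$) was established in Lemma \ref{lem:shattering}. The only subtlety worth emphasizing in the write-up is the asymmetric nature of the two guarantees: the $\poly(\Delta)$ query and computation bound of Lemma \ref{lem:LCA} is deterministic and worst-case over vertices and random tapes, whereas the component-size bound is high-probability over the random tape. Consequently the corollary's overall failure probability of $1/\poly(n)$ is inherited entirely from the shattering step, and no union bound between the two statements is needed.
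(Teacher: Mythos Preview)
Your proposal is correct and matches the paper's own proof, which simply states that the corollary follows directly from \Cref{lem:LCA}, \Cref{lem:localAnalysis}, and \Cref{lem:shattering}; your omission of an explicit citation to \Cref{lem:localAnalysis} is harmless since its conclusion is already absorbed into the hypothesis of \Cref{lem:shattering}. The extra remarks you make about the shared random tape and the deterministic versus high-probability nature of the two guarantees are accurate and a welcome clarification, but they do not constitute a different approach.
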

\begin{proof}
Follows directly from \Cref{lem:LCA}, \Cref{lem:localAnalysis}, and \Cref{lem:shattering}.
\end{proof}

\subsection{The Complete LCA Algorithm for MIS}\label{subsec:completeLCA}
\begin{theorem}
There is a local computation algorithm that, when queried on any single node $v$, performs $\poly(\Delta) \log n$ computation and answers whether $v\in I'$ or not, where $I'$ is a maximal independent set, with probability $1-1/\poly(n)$. Furthermore, for each node $v$, the expected query complexity to answer this question is $\poly(\Delta)$, and for any $\eps>0$, with probability at least $1-\eps$, the query complexity is at most $\poly(\Delta) \log(1/\eps)$.
\end{theorem}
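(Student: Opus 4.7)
The plan is to reduce the maximal independent set problem to the near-maximal one already provided by \Cref{crl:LCA}. When queried about a vertex $v$, the algorithm first invokes the LCA of \Cref{crl:LCA} to classify $v$ as in $I$, in $\Gamma(I)$, or in $V\setminus (I\cup \Gamma(I))$, using $\poly(\Delta)$ queries. In the first case answer YES and in the second answer NO. In the third case, $v$ lies in the residual graph $H=G[V\setminus (I\cup \Gamma(I))]$, and the algorithm performs a BFS in $H$ starting from $v$: at each discovered vertex $w$ it enumerates all up to $\Delta$ neighbors of $w$ in $G$ and invokes the LCA of \Cref{crl:LCA} on each to test whether the neighbor lies in $H$, adding it to the queue only if so. After the entire component $C_v$ of $v$ in $H$ is identified, run the deterministic greedy MIS on $C_v$ ordered by vertex IDs, and output whether $v$ is selected. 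Define $I'$ to be $I$ together with the greedy MIS chosen on every component of $H$.

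Correctness and, crucially, consistency across simultaneous independent queries follow because all randomness is fixed at the beginning, so both $I$ and the residual graph $H$ are deterministic functions of that randomness; consequently, a query about any other vertex $u$ in the same component $C_v$ would explore the identical component and run the identical deterministic greedy MIS, producing the same labels. Since each component is handled in isolation and $I$ is already an independent set with $\Gamma(I)$ excluded, the union $I'$ is indeed a maximal independent set of $G$.

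For the query complexity, combine the component size bounds of \Cref{lem:shattering} with the $\poly(\Delta)$ per-vertex cost of \Cref{crl:LCA}. With probability $1-1/\poly(n)$ every component of $H$ has size $O(\Delta^4\log n)$, and the BFS costs $\Delta\cdot \poly(\Delta)=\poly(\Delta)$ queries per visited vertex, for an overall bound of $\poly(\Delta)\log n$. For the per-vertex high-probability bound, the second part of \Cref{lem:shattering} gives $|C_v|=O(\Delta^4\log(1/\eps))$ with probability $1-\eps$, so the query complexity is at most $\poly(\Delta)\log(1/\eps)$ with probability $1-\eps$. The expected complexity then follows by integrating this exponential-type tail bound: $\E[|C_v|]=\sum_{k\geq 1}\Pr[|C_v|\geq k]=O(\Delta^4)$, and therefore the expected total query complexity per questioned vertex is $\poly(\Delta)$. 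I do not anticipate a serious obstacle here; the only minor subtlety is ensuring that the BFS actually exhausts the component of $v$ in $H$ before invoking the greedy procedure, which is handled by always querying every neighbor of every discovered vertex and relying on the fact that the randomness is fixed so all invocations of \Cref{crl:LCA} refer to the same $I$.
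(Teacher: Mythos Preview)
Your proposal is correct and follows essentially the same approach as the paper: invoke the near-MIS LCA of \Cref{crl:LCA}, and if $v$ lands in the residual graph $H$ perform a BFS to reveal its component and then apply the lexicographically-first greedy MIS on that component, using the size bounds of \Cref{lem:shattering} for the $\poly(\Delta)\log n$ and $\poly(\Delta)\log(1/\eps)$ guarantees. You add a bit more detail than the paper on consistency (fixed randomness makes $I$ and $H$ well-defined, so all queries in the same component run the identical greedy procedure) and on deriving the expected $\poly(\Delta)$ bound by integrating the exponential tail on $|C_v|$; both additions are sound and welcome.
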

\begin{proof}
When queried on each node $v$, we first call the LCA of \Cref{crl:LCA} which uses $\poly(\Delta)$ queries and answers whether $v \in I$, $v\in \Gamma(I)$, or $v\notin (I \cup \Gamma(I))$. If $v \in I$ or $v\in \Gamma(I)$, we are done and we terminate. Otherwise, we perform more queries to reveal the connected component of $v$ in the subgraph $H$ induced by the subset $V\setminus (I \cup \Gamma(I))$.  From \Cref{crl:LCA}, we know that for each $\eps>0$, with probability at least $1-\eps$, the size of the connected component of $v$ is at most $\poly(\Delta) \log(1/\eps)$, and, with probability at least $1-1/\poly(n)$, this size is at most $\poly(\Delta) \log n$.

Suppose that $v\notin (I \cup \Gamma(I))$. We perform a breadth first search, using the LCA of \Cref{crl:LCA}, to reveal all the nodes in the component of $v$ in $H$. Concretely, we first call the LCA of \Cref{crl:LCA} on each of the neighbors of $v$. Then, we repeat the same procedure for all neighbors $u \notin (I \cup \Gamma(I))$, calling the LCA of all of their neighbors, and similarly repeat it on each of their neighbors that turn out to be in $V\setminus (I \cup \Gamma(I))$, and so on. This BFS will reveal all the nodes in the connected component of the subgraph induced by $V\setminus (I \cup \Gamma(I))$ that contains $v$, at the expense of $\poly(\Delta)$ queries per each node in this component. From \Cref{crl:LCA}, we know that any connected component of $H$ has at most $O(\Delta^4 \log (1/\eps))$ vertices, with probability at least $1-\eps$. Hence, the number of queries needed to produce the component (i.e., determining all of its nodes and edges) is $O(\Delta^4 \log (1/\eps))$ with probability at least $1-\eps$, and in particular  $O(\Delta^4 \log n)$ with probability at least $1-1/\poly(n)$.

Once the component of $v$ is determined, we augment $I$ to be a maximal independent set $I'$ by adding to $I$ a maximal independent set of this connected component of $H$. This can be computed in any arbitrary fixed manner (in a way that is independent of the questioned vertex $v$). For instance, we use the lexicographically-first greedy MIS, which processes the vertices of the component one by one according to an increasing ordering of their identifiers and adds to $I'$ any vertex for which we have not added any of its neighbors to $I'$ before. This part does not add any new queries to the computation as we have already revealed all of the nodes and edges of the component.
\end{proof}

Finally, we comment that by a technique of Alon et al.~\cite{alon2012LCA} one can transform the above algorithm into one with $\poly(\Delta \log n)$ query complexity that uses only $\poly(\Delta \log n)$ bits of randomness---the base observation is that $\poly(\Delta \log n)$-wise independence suffices for our analysis.

\bibliographystyle{alpha}
\bibliography{refs}

\end{document}